\newtheorem{theorem}{Theorem}
\newtheorem{lemma}{Lemma}
\newtheorem{definition}{Definition}
\newtheorem{corollary}{Corollary}
\newenvironment{proofof}[1]{\begin{trivlist}\item[]{\flushleft\it
Proof of~#1. }}
{\qed\end{trivlist}}
\newcommand{\diff}[2]{\frac{\partial #1}{\partial #2}}
\newcommand{\gap}{\gamma}
\newcommand{\Ffunc}{f}
\newcommand{\Jfunc}{J'}
\newcommand{\sumOneJump}{\int \left(e^{-i(\int_{s_1}^1E_{\nu_1}(\xi)\mathrm{d}\xi+\int_{0}^{s_1}E_{\gst}(\xi)\mathrm{d}\xi)T}\beta_{\nu_1,\gst}(s_1)\right)\ket{\nu_1(1)}\mathrm{d}\left(\Jfunc_1 \right)}
\newcommand{\sumTwoJump}{\int\left(e^{-i(\int_{s_2}^1E_{\nu_2}(\xi)\mathrm{d}\xi+\int_{s_1}^{s_2}E_{\nu_1}(\xi)\mathrm{d}\xi+\int_{0}^{s_1}E_{\gst}(\xi)\mathrm{d}\xi)T}\beta_{\nu_2,\nu_1}(s_2)\beta_{\nu_1,\gst}(s_1)\right)\ket{\nu_2(1)}\mathrm{d}\left(\Jfunc_2 \right)}
\newcommand{\sumOneJumpxpr}{C_1}
\newcommand{\sumTwoJumpxpr}{C_2}
\newcommand{\gfunc}[2]{\kappa_{#2}(#1)}
\newcommand{\gfuncp}[2]{\gap_{#2}(#1)}
\newcommand{\bra}[1]{\langle #1|}
\newcommand{\ket}[1]{|#1\rangle}
\newcommand{\braket}[2]{\langle #1|#2\rangle}
\newcommand{\ketbra}[2]{|#1\rangle\!\langle #2|}
\newcommand{\Th}[1]{$#1^{\text{th}}$}
\newcommand{\eqrefb}[1]{Eq.~\ref{#1}}
\newcommand{\gst}{G}
\newcommand{\nn}{\nonumber\\}
\begin{document}
\bibliographystyle{unsrt}

\title{Improved Error Bounds for the Adiabatic Approximation}
\author{Donny Cheung}
\affiliation{Institute for Quantum Information Science, University of Calgary, Alberta, Canada}

\author{Peter H{\o}yer}
\affiliation{Institute for Quantum Information Science, University of Calgary, Alberta, Canada}

\author{Nathan Wiebe}
\affiliation{Institute for Quantum Information Science, University of Calgary, Alberta, Canada}

\begin{abstract}
Since the discovery of adiabatic quantum computing, a need has arisen for rigorously proven bounds for
the error in the adiabatic approximation.  We present in this paper, a rigorous and elementary derivation of
 upper and lower bounds on the error incurred
from using the adiabatic approximation for quantum systems.  Our bounds are often asymptotically tight in the limit of slow evolution for fixed Hamiltonians, and are used to provide sufficient
 conditions for the application of the adiabatic approximation.  We show that our sufficiency criteria exclude the
Marzlin--Sanders counterexample from the class of Hamiltonians that obey the adiabatic approximation. Finally, we demonstrate the existence
of classes of Hamiltonians that resemble the Marzlin--Sanders counterexample Hamiltonian, but also obey the adiabatic approximation.
\end{abstract}

\maketitle
\section{Introduction}
The adiabatic approximation is a technique of central importance to the study of time-dependent quantum
systems.  Its importance is derived from the fact that adiabatic evolution is one of the few known limiting cases where the dynamics that is generated
by a time-dependent Hamiltonian is well understood.  The approximation states that if the Hamiltonian that generates a quantum system's
evolution changes sufficiently slowly, then a quantum state that is prepared in an instantaneous eigenstate
remains in an instantaneous eigenstate throughout the evolution.  The adiabatic approximation
 has been used to control the evolution of quantum systems~\cite{oreg:stirap,kuklinski:stirap},
analyze particle collisions~\cite{smith:scattering}, design circuit based quantum computers~\cite{averin:qcomp,babcock:adiabatic,vager:fastadiabatic}
and has been used to create a new paradigm for quantum computing; namely adiabatic quantum computing~\cite{farhi:adiabatic,aharonov:adequiv}.

In many of these applications, bounding the time needed to apply the adiabatic approximation was seen as a secondary issue.  Adiabatic quantum computing proved to be a notable exception.  In adiabatic quantum computing, quantum algorithms are constructed by choosing a time-dependent Hamiltonian whose ground state at $t=0$ is an easily prepared quantum state and at $t=T$, the ground state encodes the solution to a computational problem.  If the adiabatic approximation holds, then the time-evolution transforms the initial ground state into the solution to the problem, thereby enacting the quantum algorithm.  A different notion of algorithmic complexity must be used to assess these algorithms because they do not use a discrete set of gates. The scaling of the energy and the time needed for this evolution with the problem size, is a natural way to assess the complexity of such algorithms~\cite{farhi:adiabatic,roland:adsim}.  Consequently, rigorous estimates for the error in the adiabatic approximation are needed to demonstrate such a scaling~\cite{ruskai:adiabatic}.

 There are many possible ways to adiabatically evolve the initial state into the final state~\cite{roland:localad,farhi:adiabaticpaths,rezakhani:adiabaticexponential}.
We are, at least in principle, given the freedom to choose these Hamiltonians to depend on the properties of the evolution such as the Hilbert space dimension, the error tolerance or even the total evolution time.  This line of inquiry lead to the
Marzlin--Sanders counterexample~\cite{marzlin:counter}, which overturned conventional wisdom about the validity of the approximation.

In 2004 Marzlin and Sanders~\cite{marzlin:counter} put forward a Hamiltonian that
satisfied the traditionally held  ``slowness'' criterion for the validity of the approximation, and
yet it generated an evolution that did not agree with the adiabatic approximation, even in the limit of arbitrarily
slow evolution.  This
created doubt in the scientific community about the conditions that are necessary for the adiabatic approximation and, along with the discovery of adiabatic quantum computing, has led to further interest into the conditions are needed to apply the adiabatic approximation~\cite{wu:adiabatic,mackenzie:adiabaticvalidity,lidar:adiabaticvalidity}.

We achieve three important goals in this paper.  First, we provide a rigorous analysis of adiabatic evolution
that is intuitive, elementary and avoids ad hoc physical assumptions.  Second, we provide
upper bounds and lower bounds for the error in the adiabatic approximation that are often asymptotically tight, in the limit of slow evolution for a fixed Hamiltonian.  Third and last, we show that the adiabatic approximation can be successfully applied
to a broad class of quantum systems that are related to the Marzlin--Sanders counterexample.

\section{Background}
In this section we present the background material that is needed to understand our main results,
which are given in the following section.  In particular we discuss the
adiabatic approximation in greater detail, the Marzlin--Sanders
counterexample~\cite{marzlin:counter} and the Jansen, Ruskai and Seiler (JRS) bound on the error in the adiabatic approximation~\cite{ruskai:adiabatic}.

The time-evolution of closed quantum systems over a time interval $t\in[0,T]$ is described by the time-evolution operator, which is expressed
in terms of the dimensionless time $s=t/T$ as
\begin{equation}
U(s,0)\ket{\psi(0)}:=\ket{\psi(s)},
\end{equation}
where $\ket{\psi(s)}$ is a quantum state evaluated at the dimensionless time $s$.

We write the adiabatic approximation for the action of the time-evolution operator on an instantaneous eigenstate of $H(0)$, $\ket{\gst(0)}$ (typically taken to be the ground state) as
\begin{equation}
U(s,0)\ket{\gst(0)}\approx e^{-i\int_0^sE_{\gst}(\xi)\mathrm{d}\xi T}\ket{\gst(s)},\label{eq:adapprox}
\end{equation}
where $E_{\gst}(s)$ is the eigenvalue of the instantaneous eigenvector $\ket{\gst(s)}$, and the geometric phase~\cite{berry:phase} is absorbed into the definition of the eigenstate.
The error in the adiabatic approximation is the
magnitude of the projection of the evolved state onto the space that is orthogonal to
$\ket{\gst(s)}$.  This error is generally considered to be small if the variation of the Hamiltonian with time
is small compared to the square of the minimum eigenvalue gap between the initial state $\ket{G(s)}$, and all other instantaneous eigenvectors~\cite{kato:adiabatic,sakurai}.  We write this requirement as
\begin{equation}
\frac{\max_s\|\dot H(s)\|}{\min_{s,\nu}[\gap_{\nu,\gst}(s)]^2}\in o(T),\label{eq:adiabaticcond}
\end{equation}
where $\gap_{\nu,\gst}(s)$ is the eigenvalue gap between the instantaneous eigenstates $\ket{\nu(s)}$ and $\ket{\gst(s)}$.  Adiabatic theorems have also been proven for systems that have eigenvalue crossings, but approach those crossings in a smooth fashion~\cite{avron:adiabatic}.   

The Marzlin--Sanders counterexample~\cite{marzlin:counter} provided a Hamiltonian for which~\eqrefb{eq:adiabaticcond}
holds but~\eqrefb{eq:adapprox} does not.  Their
counterexample Hamiltonian is chosen such that $\|\ddot H(s)\|$ diverges in the limit as the quantum system's evolution becomes arbitrarily slow.  The discrepancy with~\eqrefb{eq:adapprox} was attributed to a resonant effect, wherein the normally
negligible contribution of the second derivative of $H$ became dominant due
to increasingly rapid variation as $T$ increases.  Although the fact that the error in the adiabatic approximation depends on the differentiability class of the Hamiltonian was recognized before the Marzlin-Sanders counterexample was discovered~\cite{avron:adiabatictheorem,reichardt:adiabatic,teufel:adiabatic}, the counterexample showed that terms that depend on higher-order derivatives of $H(s)$ can sometimes dominate the expression for the error in the adiabatic approximation.

This problem can be addressed by using upper bounds for the error
in the adiabatic approximation, rather than asymptotic estimates, because
upper bounds do not neglect any contribution to the error.  Examples of such upper bounds can be found in~\cite{ambainis:adiabatic,reichardt:adiabatic,teufel:adiabatic,ruskai:adiabatic}.
The best upper bound presently known for the error in the adiabatic approximation is given by the JRS bound~\cite{ruskai:adiabatic}. The JRS bound states that for an $m$-fold degenerate initial state $\ket{\gst(0)}$ (meaning that there exist $m-1$ instantaneous eigenvectors that share the same eigenvalue as $\ket{\gst(s)}$ for all $s$), the approximation error $\|(\openone-P_\gst(1))U(1,0)\ket{\gst(0)}\|$ is bounded above by
\begin{align}
 \frac{m\|\dot H(0)\|}{\gap(0)^2T}+\frac{m\|\dot H(1)\|}{\gap(1)^2T}+\frac{1}{T}\int_0^1\left(\frac{m\|\ddot H(\xi)\|}{\gap(\xi)^2} +\frac{7m\sqrt{m}\|\dot H(\xi)\|^2}{\gap(\xi)^3}\right)\mathrm{d}\xi.\label{eq:ruskaibd}
\end{align}
The quantity $\gap(\xi)$ in~\eqrefb{eq:ruskaibd} represents the minimum eigenvalue gap between the state $\ket{\gst(\xi)}$
and any other instantaneous eigenstate at time $s=\xi$, $P_\gst(1)$ represents the projector $\ket{\gst(1)}\bra{\gst(1)}$ and $\openone$ is the identity operator.   Given that the Hamiltonian $H(s)$, is not explicitly a function of $T$, this upper bound can be inverted to find a value of $T$ such that the error in the adiabatic approximation is less than any fixed error tolerance.

As~\eqrefb{eq:ruskaibd} depends on $\|\ddot H(s)\|$, it seems to be ideal for addressing the issues caused by
the divergent second derivative of the Marzlin--Sanders counterexample Hamiltonian.  However, JRS noted that the third term in~\eqrefb{eq:ruskaibd} can be replaced by one that is $O(1/T^2)$~\cite{ruskai:adiabatic,reichardt:adiabatic}, implying that their upper bound is not asymptotically tight.  Because their bound is not asymptotically tight, it cannot be used to find a sufficient condition for the validity of~\eqrefb{eq:adiabaticcond}.

We expand on these results by providing upper and lower bounds for the error in the
adiabatic approximation that are often asymptotically tight (meaning that our error bounds converge
to the observed error in the limit of large $T$), given that the leading order term in
our bounds does not vanish for any $T$.  We then use these bounds to address Hamiltonians that are
related to the Marzlin--Sanders counterexample.  We find using these bounds that many Hamiltonians whose
second derivatives diverge as $T$ increases, obey the adiabatic approximation.  These results
are summarized in the following section.


\section{Results}
This section contains our main results, which address the question of when the adiabatic approximation can be used and
provide asymptotic estimates of the error in the approximation.  These results are consequences of our upper bounds
for the error in the adiabatic approximation, which we present in Section~\ref{sec:approximatepath}.


As with most adiabatic theorems, we require that the variation of the Hamiltonian is slow compared to an appropriate timescale.  
Our first theorem states that the adiabatic approximation is valid if the evolution time $T$, is large compared to a timescale $\Delta_0$, which we define below. Our timescale $\Delta_0$ reduces to that in~\eqrefb{eq:adiabaticcond} if $T$ is large compared to a second timescale $\Delta_1$, which depends on the second and third derivatives of the Hamiltonian.  These timescales are given in the following definition.

\begin{definition}\label{def:timescale}
Let $H(s)$ be differentiable three times on the interval $[0,1]$, $\gap_{\nu,\gst}(s)$ be the eigenvalue gap between the instantaneous eigenstates $\ket{G(s)}$ and $\ket{\nu(s)}$ and let $\gap_{\min}$ be the minimum eigenvalue gap
between any two non-degenerate instantaneous eigenvectors of $H(s)$ on this interval. We then define the timescales $\Delta_0$
and $\Delta_1$ to be
\begin{align}\Delta_0:=&\frac{\|\dot H\|}{\min_{\nu,s}[\gap_{\nu,G}(s)]^2}+\frac{\Delta_1}{\gap_{\min}T},\\
\Delta_1:=&\frac{1}{\gap_{\min}}\left(\frac{\|\dddot H\|}{\gap_{\min}}+\frac{\|\ddot H\|^3}{\gap_{\min}^3}+\frac{\|\dot H\|^6}{\gap_{\min}^6} \right),\label{eq:delta0def}
\end{align}
where $\|\dot H\|$, $\|\ddot H\|$ and $\|\dddot H\|$ represent the maximum values of the norms of the first three derivatives of $H(s)$.
\end{definition}

The following theorem uses the criterion that the timescale $\Delta_0$ is asymptotically smaller than $T$ as a sufficient condition for the applicability of the adiabatic approximation, in the limit of large $T$.  We refer to the resulting theorem as a zeroth-order adiabatic theorem because it gives an expression for the approximation error that is correct to zeroth-order in powers of $T^{-1}$.

\begin{theorem}\label{thm:zeroorder}(Zeroth-order adiabatic theorem)
Let $H:[0,1]\mapsto \mathbb{C}^{N\times N}$ be a Hamiltonian that is differentiable three times and has a minimum eigenvalue gap between
any two non-degenerate eigenvectors of $\gap_{\min}>0$.
If $\ket{\gst(s)}$ is a non-degenerate instantaneous eigenstate of $H(s)$ and $\Delta_0\in o(T)$, then the error in the adiabatic approximation obeys
 \begin{equation}
(\openone-P_G(1))U(1,0)\ket{G(0)}= O(\Delta_0/T),
 \end{equation}
 where $\openone$ is the identity operator.
\end{theorem}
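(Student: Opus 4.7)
The plan is to work in the instantaneous eigenbasis of $H(s)$, expand the evolved state as a Dyson-like series ordered by the number of transitions (``jumps'') between distinct eigenstates, and then estimate each term by integration by parts in the dimensionless time.

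First I would pass to the adiabatic frame in which the dynamical phases have been factored out, so that a state initially equal to $\ket{\gst(0)}$ evolves under a residual generator whose off-diagonal matrix elements between eigenstates $\ket{\mu(s)}$ and $\ket{\nu(s)}$ are the coupling coefficients $\beta_{\mu,\nu}(s) = \braket{\mu(s)}{\dot\nu(s)}$. The standard identity $\braket{\mu(s)}{\dot\nu(s)} = \bra{\mu(s)}\dot H(s)\ket{\nu(s)}/(E_\nu(s) - E_\mu(s))$ bounds each coupling in magnitude by $\|\dot H\|/\gap_{\min}$, and analogous identities for its derivatives produce controlled expressions involving $\ddot H$, $\dot H^2/\gap$, and so on. Iterating the Schr\"odinger equation in this frame produces a series whose $k$-th term is a $k$-fold time-ordered integral of $k$ coupling factors multiplied by an oscillatory exponential $e^{-iT\Theta_k(s_1,\dots,s_k)}$, where $\Theta_k$ is the dynamical phase accumulated along the piecewise-eigenstate path between successive jumps. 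The zeroth term is exactly the right-hand side of Eq.~\ref{eq:adapprox}, so the approximation error is carried by $\sum_{k\ge 1}$ of these terms.

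Next I would isolate the one-jump and two-jump contributions (call them $\sumOneJumpxpr$ and $\sumTwoJumpxpr$) and argue that the tail starting at three jumps is already $O(1/T^2)$ times quantities already absorbed in $\Delta_1$, since each additional jump multiplies the integrand by a coupling of size $\|\dot H\|/\gap_{\min}$ while a further integration by parts of the oscillatory phase supplies the compensating factor $1/T$. For $\sumOneJumpxpr$ I would integrate by parts once in $s_1$, using $1/(iT\,\gap_{\nu_1,\gst}(s_1))$ as the antiderivative of the oscillation; the boundary pieces produce the familiar $\|\dot H(0)\|/(\gap(0)^2 T)$ and $\|\dot H(1)\|/(\gap(1)^2 T)$ endpoint contributions, and the remaining integral, norm-bounded by $\|\ddot H\|/\gap^2 + \|\dot H\|^2/\gap^3$, yields the leading $\|\dot H\|/\min_{\nu,s}[\gap_{\nu,\gst}(s)]^2 \cdot T^{-1}$ piece of $\Delta_0/T$. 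For $\sumTwoJumpxpr$ I would integrate by parts twice, first in $s_2$ and then in $s_1$, picking up an overall $1/T^2$; each differentiation may land on a coupling $\beta_{\mu,\nu}$ and, via the gap identity above, spawns schematic expressions of the form $\dddot H/\gap_{\min}^2$, $\ddot H\,\dot H/\gap_{\min}^3$ or $\dot H^3/\gap_{\min}^4$, which after reduction of mixed products to single monomials via elementary inequalities such as $abc \le a^3 + b^3 + c^3$ reproduce exactly the three constituents of $\Delta_1$ and thus the $\Delta_1/(\gap_{\min}T)$ contribution to $\Delta_0$.

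The hard part will be this last piece of bookkeeping: derivatives of $\beta_{\mu,\nu}$ hit both the numerator and the denominator of the gap formula, producing a proliferation of mixed terms whose combined norms must be shown to fit inside the three pure monomials $\|\dddot H\|/\gap_{\min}$, $\|\ddot H\|^3/\gap_{\min}^3$, $\|\dot H\|^6/\gap_{\min}^6$ that define $\Delta_1$. Once that is done, summing over the finitely many eigenstate indices $\nu_1, \nu_2$ (a dimension-dependent constant absorbed into the $O$-notation) and invoking $\Delta_0 \in o(T)$ to guarantee that the overall bound tends to zero yields the claimed estimate $\|(\openone - P_\gst(1))U(1,0)\ket{\gst(0)}\| = O(\Delta_0/T)$. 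Strict positivity of $\gap_{\min}$ and three-fold differentiability of $H$ are used throughout to keep every denominator uniformly bounded and every differentiated quantity well defined.
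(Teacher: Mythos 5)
Your overall route---expanding the evolved state in the instantaneous eigenbasis as a series ordered by the number of jumps, integrating the one- and two-jump terms by parts, and controlling the rest---is the same as the paper's (Theorem~\ref{thm:pathint} together with Lemmas~\ref{lem:jumpbound} and~\ref{lem:intpartsresult}), but two of your steps, as described, do not deliver the stated bound. First, for $\sumOneJumpxpr$ you integrate by parts only once and assert that the leftover integral, which you bound by $(\|\ddot H\|/\gap_{\min}^2+\|\dot H\|^2/\gap_{\min}^3)/T$, sits inside $O(\Delta_0/T)$. That inference fails precisely in the regime the theorem is built for, namely Hamiltonians whose derivative norms grow with $T$ (cf.\ Corollaries~\ref{cor:marzsand2} and~\ref{cor:marzsand}). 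For example, take $\|\ddot H\|\sim T^{1/4}$ with $\|\dot H\|$, $\|\dddot H\|$ and $\gap_{\min}$ constant: then $\Delta_1\sim T^{3/4}$, $\Delta_0$ is bounded, so $\Delta_0/T\sim T^{-1}$, while your one-integration-by-parts remainder is only controlled at the level $T^{-3/4}$, which is not $O(\Delta_0/T)$. The paper avoids this by integrating $\sumOneJumpxpr$ by parts \emph{twice} (Lemma~\ref{lem:intpartsresult}), pushing the remainder to $\mathbf{R}_0=O(1/T^2)$ with numerators involving $\|\dddot H\|$; this is exactly where three-fold differentiability and the $\|\dddot H\|/\gap_{\min}$ constituent of $\Delta_1$ enter, and those $1/T^2$ terms are then absorbed via $\Delta_1/(\gap_{\min}T^2)\le\Delta_0/T$.

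Second, the tail of the series (three or more jumps) cannot be dismissed term by term with ``one coupling and one compensating $1/T$ per extra jump.'' When a nested oscillatory integral is integrated by parts, the derivative that lands on the inner integral (through the fundamental theorem of calculus) consumes one of the oscillatory phases, so one gains a factor of $1/T$ only per \emph{pair} of jumps; moreover there are infinitely many terms, and their coefficients involve powers of quantities like $\Gamma$ built from $\|\dot H\|$ and $\|\ddot H\|$, which may themselves grow with $T$. The paper handles this with the recursion of Section~\ref{sec:recursivepath} and Lemma~\ref{lem:jumpbound}: the ordered-simplex volumes supply $1/m!$ factors, the binomial theorem resums the expansion into the bound $\left(1+\frac{\Gamma}{\|\dot H\|T}\right)(e^{\Gamma/(\gap_{\min}T)}-1)-\frac{\Gamma}{\gap_{\min}T}$, and only then does the hypothesis (via $\Gamma^2\in O(\gap_{\min}\Delta_1)$ and $\Delta_1\in o(\gap_{\min}T^2)$, so $\Gamma/(\gap_{\min}T)\in o(1)$) permit a Taylor expansion showing the tail is $O(\Delta_1/(\gap_{\min}T^2))$. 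Without this resummation and the control of $\Gamma/(\gap_{\min}T)$, your estimate of the tail is unjustified. (A minor point: absorbing dimension-dependent constants is tolerable for fixed $N$, but the paper's Lemma~\ref{lem:betaderiv} deliberately eliminates all $N$-dependence with a resolution-of-identity argument, which is what makes the stated bounds dimension-free.)
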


The above theorem states that the error in the adiabatic approximation is zero, plus an additional term that is small in the limit of large $T$.  The requirement that $\Delta_0 \in o(T)$, reduces to the standard criterion given in~\eqrefb{eq:adiabaticcond} if the derivatives of $H(s)$ are independent of $T$.  This implies that Theorem~\ref{thm:zeroorder} is as a generalization of the standard criterion for the validity of the approximation.  The following theorem improves upon the result of Theorem~\ref{thm:zeroorder} by providing an expression for the error in the adiabatic approximation that is correct to leading order in powers of $T^{-1}$.  We call this result a first-order adiabatic theorem.

\begin{theorem}\label{thm:jumpbound}(First-order adiabatic theorem)
Let $H:[0,1]\mapsto \mathbb{C}^{N\times N}$ be a Hamiltonian that is differentiable three times and has a minimum eigenvalue gap between
any two non-degenerate eigenvectors of $\gap_{\min}>0$.
If we take $\ket{\gst(s)}$ to be a non-degenerate instantaneous eigenstate of $H(s)$ and $\Delta_1\in o(\gap_{\min}T^2)$, then the error in the adiabatic
 approximation, $(\openone-P_{\gst}(1))U(1,0)\ket{\gst(0)}$, is
\begin{align}
\left.\sum_{\nu\neq \gst}e^{-i\phi_\nu}\ket{\nu(1)}\left(\frac{\braket{\dot\nu(s_1)}{\gst(s_1)}e^{-i\int_0^{s_1} \gap_{\gst,\nu}(\xi)\mathrm{d}\xi T}}{-i\gap_{\gst,\nu}(s_1)T}\right)\right|_{s_1=0}^{1}+O\left(\frac{\Delta_1}{\gap_{\min}T^2}\right)\label{eq:adAssScale},
\end{align}
where $\phi_\nu=\int_0^1E_{\nu}(\xi)\mathrm{d}\xi T$, $\gap_{\gst,\nu}(s):=E_{\gst}(s)-E_{\nu}(s)$,  $P_{\gst}(1)=\ket{\gst(1)}\bra{\gst(1)}$ and the phase of each instantaneous eigenstate is chosen such that $\braket{\dot \nu(s)}{\nu(s)}=0$ for every $s\in[0,1]$ and every $\nu\in \{0,\ldots,N-1\}$.
\end{theorem}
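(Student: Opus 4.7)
The plan is to start from a Dyson-style expansion of $U(1,0)\ket{\gst(0)}$ in the instantaneous eigenbasis that separates contributions by the number of transitions between eigenstates, isolate the one-jump contribution $\sumOneJumpxpr$, and extract its leading order by integration by parts in the jump time. Writing
\begin{equation*}
(\openone-P_{\gst}(1))U(1,0)\ket{\gst(0)} = \sumOneJumpxpr + \sum_{k\ge 2} C_k,
\end{equation*}
where $\sumOneJumpxpr = \sumOneJump$ and each $C_k$ is a $k$-fold iterated integral over ordered jump times weighted by products of transition amplitudes $\beta_{\nu_j,\nu_{j-1}}$, my first step is to recast $\sumOneJumpxpr$ as the explicit boundary expression in the theorem plus a small residual, and my second step is to absorb that residual together with $\sum_{k\ge 2} C_k$ into $O(\Delta_1/(\gap_{\min}T^2))$.

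Inside each $\nu_1$-summand of $\sumOneJumpxpr$ I would integrate by parts in $s_1$ against the rapid phase, using
\begin{equation*}
\frac{d}{ds_1}\,e^{-i\int_0^{s_1}\gap_{\gst,\nu_1}(\xi)\mathrm{d}\xi\,T} = -i\gap_{\gst,\nu_1}(s_1)T\,e^{-i\int_0^{s_1}\gap_{\gst,\nu_1}(\xi)\mathrm{d}\xi\,T}.
\end{equation*}
Pulling the fixed phase $e^{-i\phi_{\nu_1}}$ out of the combined exponential in $\sumOneJumpxpr$ and using $\beta_{\nu,\gst}(s)=\braket{\dot \nu(s)}{\gst(s)}$ (which follows from the phase convention $\braket{\dot \nu}{\nu}=0$), the boundary term reproduces exactly the $s_1=0,1$ expression in~\eqrefb{eq:adAssScale}. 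The residual integral carries an explicit $1/T$ prefactor and an integrand proportional to $\frac{d}{ds_1}\bigl(\beta_{\nu_1,\gst}/\gap_{\gst,\nu_1}\bigr)$, controlled by $\|\ddot H\|/\gap_{\min}$ and $\|\dot H\|^2/\gap_{\min}^2$. A second integration by parts on this residual produces another $1/T$ factor, with boundary and bulk pieces bounded by $\|\dddot H\|/\gap_{\min}^2$, $\|\ddot H\|\|\dot H\|/\gap_{\min}^3$, and $\|\dot H\|^3/\gap_{\min}^4$; each of these is subsumed by one of the three monomials in~\eqrefb{eq:delta0def} after division by $\gap_{\min}T^2$.

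For the multi-jump tail I would invoke the upper bounds developed in Section~\ref{sec:approximatepath}: a single integration by parts in each jump time pulls out a factor of order $\|\dot H\|/(\gap_{\min}^2 T)$, so $C_k$ is of order $(\|\dot H\|/(\gap_{\min}^2 T))^k$ and the geometric tail over $k\ge 2$ is of order $\|\dot H\|^2/(\gap_{\min}^4 T^2)$, subsumed by the $\|\dot H\|^6/\gap_{\min}^6$ piece of $\Delta_1/(\gap_{\min}T^2)$. The hypothesis $\Delta_1\in o(\gap_{\min}T^2)$ guarantees that this tail converges for large enough $T$.

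The main obstacle is bookkeeping rather than conceptual. One must verify that every integrand arising from the two integrations by parts on $\sumOneJumpxpr$ falls into one of the three structural forms in $\Delta_1$, and check that the implicit sums over intermediate eigenstates in the $C_k$ with $k\ge 2$ do not introduce uncontrolled dimension-dependent factors. The completeness-style estimate $\sum_\nu |\braket{\nu(s)}{\dot \gst(s)}|^2 \le \|\dot H(s)\|^2/\gap_{\min}^2$ together with the uniform bound $\gap_{\nu,\gst}(s)\ge \gap_{\min}$ handles the intermediate sums, while the elementary inequality $a^p b^q c^r \le \frac{p\,a^{p+q+r}+q\,b^{p+q+r}+r\,c^{p+q+r}}{p+q+r}$ converts any mixed derivative monomial into the three pure monomials that define $\Delta_1$, closing the argument.
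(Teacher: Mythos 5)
Your treatment of $C_1$ is essentially the paper's: two integrations by parts, boundary terms giving the stated leading expression, residuals absorbed into $\Delta_1/(\gap_{\min}T^2)$ (both you and the paper implicitly need $\Delta_1$ to be at least of constant size to absorb the mixed monomials, which is a shared, minor looseness). The genuine gap is in your multi-jump tail. The claim that ``a single integration by parts in each jump time pulls out a factor of order $\|\dot H\|/(\gap_{\min}^2T)$, so $C_k$ is of order $(\|\dot H\|/(\gap_{\min}^2T))^k$'' is not what Section~\ref{sec:approximatepath} provides, and it is not obtainable by that method. The jump-time integrals are nested, so integrating by parts in the outer variable produces (via the fundamental theorem of calculus and the boundary at the inner upper limit) terms in which the phases of two consecutive jumps merge into $\int_0^s(E_{\nu_{j-1}}-E_{\nu_{j+1}})\,T$; since only \emph{consecutive} labels in a path must differ, $\nu_{j+1}=\nu_{j-1}$ is allowed and this merged phase can vanish identically, leaving a non-oscillatory integral that no further integration by parts can suppress. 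This is exactly why Lemma~\ref{lem:jumpbound} in the paper couples jumps in pairs and obtains only one factor of $(\gap_{\min}T)^{-1}$ per \emph{two} jumps, i.e.\ $\|C_{2m}\|\le\Gamma^m/(m!(\gap_{\min}T)^m)$, with convergence of the tail coming from the $1/m!$ simplex volume rather than from a geometric ratio; the fact that the tail over $j\ge 3$ is nonetheless $O(\Delta_1/(\gap_{\min}T^2))$ then requires the separate argument that $\Gamma^2\in O(\gap_{\min}\Delta_1)$, which your proposal has no substitute for.

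Relatedly, you fold $C_2$ into this (invalid) geometric series, whereas getting $\|C_2\|\in O(1/T^2)$ genuinely requires a second integration by parts against the merged phase $\gap_{\gst,\nu_1}+\gap_{\nu_1,\nu_2}=E_{\gst}-E_{\nu_2}$, and this is legitimate only because non-adiabatic two-jump paths have $\nu_2\ne\gst$ and $\ket{\gst(s)}$ is assumed non-degenerate — the paper states this explicitly in the proof of Lemma~\ref{lem:intpartsresult}, and it is the only place the non-degeneracy hypothesis is doing quantitative work at second order. Your handling of the intermediate-state sums (a resolution-of-identity/completeness estimate to avoid factors of $N$) matches the paper's Lemma~\ref{lem:betaderiv} in spirit and is fine, but as written the tail estimate is the missing idea: you need either the paper's paired integration-by-parts recursion with the factorial suppression and the $\Gamma^2=O(\gap_{\min}\Delta_1)$ bound, or some other mechanism for the stationary merged phases, before the $O(\Delta_1/(\gap_{\min}T^2))$ remainder is justified.
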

It may seem strange that the leading order expression in~\eqrefb{eq:adAssScale} only depends on the properties of the Hamiltonian at the beginning and end of the adiabatic evolution, in contrast to~\eqrefb{eq:adiabaticcond}.  This dependence occurs because the time-evolution operator is expressed as a sum of rapidly oscillating path-integrals in the adiabatic limit~\cite{mackenzie:adiabatic}.  We expect that the contribution of each complete oscillation will have a negligible contribution to the integral if the oscillation frequency is sufficiently rapid compared to the timescale for the variation of the Hamiltonian.  Therefore we expect that the contribution of the \emph{incomplete} oscillations will dominate the behavior of the integral.  Only the first and last such oscillations are not necessarily complete.  The contribution of the first the last and last oscillation is dominated by the properties of the Hamiltonian at the beginning and the end of the evolution respectively; therefore, we expect that the leading order expression for the error in the adiabatic approximation should be dictated by the properties of the Hamiltonian at those times.  This dependence is also noted by a number of other studies~\cite{mackenzie:adiabatic,rezakhani:adiabaticexponential,ruskai:adiabatic,sun:higherorder_adiabatic}.

These theorems imply that the adiabatic approximation also applies for some Hamiltonians that behave like the Marzlin--Sanders counterexample in that $\|\ddot H\|$ and $\|\dddot H\|$ are increasing functions of $T$.  Our first corollary, which is given below, provides a criterion for the size of the derivatives of $H$ that is sufficient to guarantee that the standard estimate of the error in the adiabatic approximation applies.



\begin{corollary}\label{cor:marzsand2}
If $H:[0,1]\mapsto \mathbb{C}^{N\times N}$ is differentiable three times, $\ket{\gst(s)}$ is a non-degenerate eigenstate of $H(s)$, 
$\|\ddot H\|\in o( \sqrt{T})$, $\|\dddot H\|\in o({T})$
and both $\|\dot H\|$ and $(\gap_{\min})^{-1}$ are bounded above by a constant function of $T$ for all $T\ge 0$ then
\begin{equation}
\|(\openone-P_{\gst}(1))U(1,0)\ket{\gst(0)}\|\in O\left(\max_{s=0,1}\left[\frac{\|\dot H(s)\|}{\min_{\nu}[\gap_{\nu,G}(s)]^2T}\right]\right).
\end{equation}
\end{corollary}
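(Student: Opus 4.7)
The plan is to apply Theorem~\ref{thm:jumpbound} and then bound each coefficient appearing in the leading-order sum by an elementary perturbation-theory identity.

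First I would verify the hypothesis of Theorem~\ref{thm:jumpbound}, namely $\Delta_1\in o(\gap_{\min}T^2)$. Since $\gap_{\min}^{-1}$ is bounded above by a constant, the definition of $\Delta_1$ reduces to $\Delta_1=O(\|\dddot H\|+\|\ddot H\|^3+\|\dot H\|^6)$. Substituting the assumed rates $\|\dddot H\|\in o(T)$, $\|\ddot H\|^3\in o(T^{3/2})$ (from $\|\ddot H\|\in o(\sqrt T)$), and $\|\dot H\|\in O(1)$ gives $\Delta_1\in o(T^{3/2})\subset o(\gap_{\min}T^2)$, so Theorem~\ref{thm:jumpbound} is applicable and expresses the error as the sum in Eq.~\eqrefb{eq:adAssScale} plus an $O(\Delta_1/[\gap_{\min}T^2])$ remainder.

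Next I would bound each summand of Eq.~\eqrefb{eq:adAssScale}. Differentiating the eigenvalue equation $H(s)\ket{\nu(s)}=E_\nu(s)\ket{\nu(s)}$ and contracting with $\bra{\gst(s)}$ for $\nu\ne\gst$ yields the standard identity $\braket{\dot\nu(s)}{\gst(s)}=\bra{\nu(s)}\dot H(s)\ket{\gst(s)}/[E_\nu(s)-E_\gst(s)]$, whence $|\braket{\dot\nu(s)}{\gst(s)}|\le\|\dot H(s)\|/\gap_{\nu,\gst}(s)$. Evaluating each summand as a boundary difference at $s_1=0$ and $s_1=1$, the resulting coefficient $c_\nu$ of $\ket{\nu(1)}$ therefore satisfies $|c_\nu|\le 2\max_{s=0,1}\|\dot H(s)\|/[\gap_{\nu,\gst}(s)^2 T]$. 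Because the vectors $\{\ket{\nu(1)}\}_{\nu\ne\gst}$ are orthonormal, the norm of the leading-order sum equals $(\sum_\nu|c_\nu|^2)^{1/2}\le\sqrt{N-1}\,\max_\nu|c_\nu|$, giving the claimed $O\!\bigl(\max_{s=0,1}\|\dot H(s)\|/[\min_\nu\gap_{\nu,\gst}(s)^2 T]\bigr)$ up to a dimension-dependent constant.

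The main obstacle is to show that the $O(\Delta_1/[\gap_{\min}T^2])$ remainder supplied by Theorem~\ref{thm:jumpbound} is also absorbed into the advertised bound. The rates $\|\ddot H\|\in o(\sqrt T)$ and $\|\dddot H\|\in o(T)$ are calibrated precisely so that, once combined with the boundedness of $\|\dot H\|$ and $\gap_{\min}^{-1}$, each contribution to $\Delta_1/(\gap_{\min}T^2)$ decays strictly faster than the $\Theta(1/T)$ behavior of the leading-order term and can therefore be folded into the big-$O$ on the right-hand side of the corollary.
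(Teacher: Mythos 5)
Your first two paragraphs are sound: the hypothesis check $\Delta_1\in o(T^{3/2})\subset o(\gap_{\min}T^2)$ is correct, and bounding the boundary term of the leading-order sum via $|\braket{\dot\nu(s)}{\gst(s)}|\le\|\dot H(s)\|/\gap_{\nu,\gst}(s)$ together with orthonormality of $\{\ket{\nu(1)}\}$ is fine (the $\sqrt{N-1}$ factor is harmless for fixed $N$; one can even avoid it since $\sum_\nu|\bra{\nu(s)}\dot H(s)\ket{\gst(s)}|^2\le\|\dot H(s)\|^2$). The genuine gap is in your final paragraph: the remainder supplied by Theorem~\ref{thm:jumpbound} is $O(\Delta_1/[\gap_{\min}T^2])$, and $\Delta_1$ contains the term $\|\ddot H\|^3/\gap_{\min}^4$. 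Under the corollary's hypothesis $\|\ddot H\|\in o(\sqrt T)$ this only gives $\Delta_1/(\gap_{\min}T^2)\in o(T^{-1/2})$, which is \emph{not} $O(1/T)$; for instance $\|\ddot H\|=\sqrt T/\log T$ is allowed and makes your remainder estimate of order $T^{-1/2}(\log T)^{-3}$, asymptotically larger than the $1/T$ bound you are trying to prove. So the claim that "each contribution to $\Delta_1/(\gap_{\min}T^2)$ decays strictly faster than $\Theta(1/T)$" fails for the $\|\ddot H\|^3$ contribution, and the route through Theorem~\ref{thm:jumpbound} cannot deliver the corollary: $\Delta_1$ is a deliberately coarsened quantity (it majorizes $\Gamma^2/\gap_{\min}$ by cubing $\|\ddot H\|$), and that coarsening is exactly what is too lossy here.

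The paper instead bypasses Theorem~\ref{thm:jumpbound} and works directly from Lemma~\ref{lem:jumpbound} and Lemma~\ref{lem:intpartsresult}, where the second derivative enters at most quadratically. Concretely: $\mathbf{R}_0\in o(1/T)$ since its worst term is $\|\dddot H\|/(\gap_{\min}^3T^2)\in o(1/T)$ and the $\|\dot H\|\|\ddot H\|$ term is $o(T^{-3/2})$; the two-jump bound \eqrefb{eq:\Ffunc2lbound} gives $\|C_2\|\in o(1/T)$ since its worst term is $\|\ddot H\|^2/(\gap_{\min}^4T^2)\in o(1/T)$; and $\Gamma\in O(1+\|\ddot H\|)\subset o(\sqrt T)$, so the many-jump remainder $\bigl(1+\Gamma/(\|\dot H\|T)\bigr)\bigl(e^{\Gamma/(\gap_{\min}T)}-1\bigr)-\Gamma/(\gap_{\min}T)$ is $O\bigl(\Gamma^2/(\gap_{\min}^2T^2)+\Gamma^2/(\|\dot H\|\gap_{\min}T^2)\bigr)\in o(1/T)$ (every term of $\Gamma$ carries a factor $\|\dot H\|$, so $\Gamma/\|\dot H\|\in o(\sqrt T)$ as well). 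Combining these with your boundary-term estimate for $C_1$ yields the stated $O\bigl(\max_{s=0,1}\|\dot H(s)\|/[\min_\nu\gap_{\nu,\gst}(s)^2T]\bigr)$. If you want to keep your structure, replace the appeal to Theorem~\ref{thm:jumpbound}'s $\Delta_1$-remainder by this finer accounting; as written, the proof does not close.
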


In other cases the adiabatic approximation may hold, but the error in the adiabatic approximation may not be $O(1/T)$.  A sufficient condition for the validity of the adiabatic approximation, for a sufficiently large $T$, is given below.
\begin{corollary}\label{cor:marzsand}
If $H:[0,1]\mapsto \mathbb{C}^{N\times N}$ is twice differentiable, $\ket{\gst(s)}$ is a non-degenerate eigenstate of $H(s)$, 
$\|\ddot H\|\in o(T)$
and both $\|\dot H(s)\|$ and $(\gap_{\min})^{-1}$ are bounded above by a constant function of $T$ for all $T\ge 0$ then

\begin{equation}
\lim_{T\rightarrow \infty}\|(\openone-P_{\gst}(1))U(1,0)\ket{\gst(0)}\|=0.
\end{equation}
\end{corollary}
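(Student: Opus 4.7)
The approach is to invoke the JRS bound~\eqrefb{eq:ruskaibd} directly rather than either of our adiabatic theorems, because Theorems~\ref{thm:zeroorder} and~\ref{thm:jumpbound} both require $H$ to be differentiable three times whereas the hypothesis of Corollary~\ref{cor:marzsand} provides only two derivatives, and the JRS estimate recalled in the background section applies under this weaker regularity.

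The plan is first to specialize the JRS bound to $m=1$, which is legitimate since $\ket{\gst(s)}$ is assumed non-degenerate. This yields four pieces: two boundary terms of the form $\|\dot H(s)\|/[\gap(s)^2T]$ at $s=0,1$, together with the integrands $\|\ddot H(\xi)\|/\gap(\xi)^2$ and $7\|\dot H(\xi)\|^2/\gap(\xi)^3$ each divided by $T$. I would then bound each piece using the hypotheses: the two boundary terms and the $\|\dot H\|^2/\gap^3$ integrand contribute $O(1/T)$ since $\|\dot H\|$ and $\gap_{\min}^{-1}$ are bounded above by constants independent of $T$, whereas the $\|\ddot H\|/\gap^2$ integrand is at most $\|\ddot H\|/\gap_{\min}^2\in o(T)$ uniformly in $\xi$, so its contribution to the bound is $o(T)/T=o(1)$. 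Summing the four pieces gives $O(1/T)+o(1)$, which vanishes as $T\to\infty$ and yields the desired limit.

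The only nontrivial step is recognizing that JRS is the correct tool rather than our own theorems; once that choice is made, every estimate is elementary. A tempting alternative would be to mollify $H$ on some scale $\epsilon(T)$ and apply Theorem~\ref{thm:zeroorder} to the smoothed Hamiltonian, but this route is obstructed by the $\|\ddot H\|^3/\gap_{\min}^4$ contribution to $\Delta_1$ in Definition~\ref{def:timescale}: mollification does not decrease $\|\ddot H\|$, and the hypothesis $\|\ddot H\|\in o(T)$ alone does not force $\|\ddot H\|^3$ to lie in $o(T^2)$, which is what Theorem~\ref{thm:zeroorder} would demand via $\Delta_0\in o(T)$. The appeal to JRS is therefore genuinely needed, and is where I expect any subtlety to reside.
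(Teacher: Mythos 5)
Your argument is correct: for each fixed $T$ the JRS inequality~\eqrefb{eq:ruskaibd} with $m=1$ is a genuine upper bound (it does not care whether the derivative norms themselves depend on $T$), and under the hypotheses the two boundary terms and the $7\|\dot H\|^2/\gap^3$ integral are $O(1/T)$ while the $\|\ddot H\|/\gap^2$ integral contributes at most $\|\ddot H\|/(\gap_{\min}^2 T)\in o(1)$, so the error vanishes as $T\to\infty$. However, this is a genuinely different route from the paper, which deliberately stays inside its own machinery: it observes that $\Gamma$ involves only $\|\dot H\|$ and $\|\ddot H\|$, so Lemma~\ref{lem:jumpbound} (stated for twice differentiable $H$) already gives $\|(\openone-P_{\gst}(1))U(1,0)\ket{\gst(0)}\|\le \|C_1\|+\|C_2\|+o(1)$; the bound~\eqrefb{eq:\Ffunc2lbound} on $\|C_2\|$ contains no third derivative and gives $\|C_2\|\in o(1)$; and for $\|C_1\|$ one avoids~\eqrefb{eq:\Ffunc1bound} (which does need $\dddot H$) by stopping after a single integration by parts, i.e.\ using~\eqrefb{eq:intparts1j} together with the first bound of Lemma~\ref{lem:betaderiv}, which again requires only $\ddot H$ and yields $\|C_1\|\in O(1/T)+o(1)$. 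So your closing claim that the appeal to JRS is ``genuinely needed'' is not right: the obstruction you correctly identify applies to Theorem~\ref{thm:zeroorder} (whose hypothesis $\Delta_0\in o(T)$ would force roughly $\|\ddot H\|\in o(T^{2/3})$), but not to the intermediate lemmas, and the paper's proof is exactly the second route. What each approach buys: yours is shorter and leans on a known external bound, which is perfectly valid mathematically; the paper's is self-contained and demonstrates that its own bounds, which it advertises as superseding JRS for large $T$, suffice to handle this twice-differentiable, $T$-dependent regime without importing the JRS result.
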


A principal limitation of these results is that we take $\ket{G(s)}$ to be non-degenerate for all $s$.  We make this assumption to simplify our analysis of the time-evolution operator, and leave the general case for future work.

These results are laid out as follows: in Sec.~\ref{sec:paths} we discuss the concept of paths
in greater detail and their relation to time-evolution.  In Sections~\ref{sec:pathint} and~\ref{sec:recursivepath} we
use the results of Sec.~\ref{sec:paths} to provide a path integral representation of the
time-evolution operator.  In Sec.~\ref{sec:approximatepath} we use integration by parts
to approximate the path integrals and prove Theorems~\ref{thm:zeroorder} and~\ref{thm:jumpbound}.
We then compare our error bounds to the error found for the Search Hamiltonian in Sec.~\ref{sec:searchham}
and discuss the Marzlin--Sanders counterexample~\cite{marzlin:counter} and prove Corollaries~\ref{cor:marzsand} and~\ref{cor:marzsand2} in Sec.~\ref{sec:marzlinsanders}.

\section{Paths}\label{sec:paths}
The concept of paths was first introduced by Wiener in 1924~\cite{wiener:classical}, in which he showed that the probability distribution for
a classical system is found by summing the final states of all paths that could describe the evolution of the system.
The corresponding result for quantum systems was proven by Feynman in 1948~\cite{feynman:pathint}, and a version that uses the Hamiltonian, rather the Lagrangian, is given
 by Farhi and Guttman~\cite{farhi:pathintegral}.  The latter approach is used by
 Mackenzie, Marcotte and Paquette (MMP)~\cite{mackenzie:adiabatic} to find an expression for the time-evolution of adiabatic quantum systems using perturbation theory.  We provide the background material that is needed to understand what we mean by paths in this section, and in the subsequent section, we
use these results to derive the MMP representation of the time-evolution operator, but without
using perturbation theory.

Our usage of path integrals is somewhat different than Feynman's original formulation.  We obtain our path integral representation, similarly to Feynman, by first discretizing time.  By approximating the Hamiltonian with one that is piecewise constant on these discrete time steps, we obtain an approximate evolution operator that is a product of exponentials of the Hamiltonian evaluated at each of the discrete times.  Unlike Feynman, we switch to the instantaneous eigenbasis of the Hamiltonian at the given time-step.  The resulting expression is then simplified by expanding the product of sums into a sum of products of the projectors that are used to represent the basis transformations.  Each product of projectors that occurs in this expansion can then be interpreted as a description of \emph{a potential} evolution of the system, which we call a path.  We then take the limit as the number of time-steps in our discrete time-evolution operator approaches infinity.  This removes the approximation error caused by taking the Hamiltonian to be piecewise constant and turns the sum over all such paths into a sum of path integrals.  These ideas will be elaborated on as we proceed.


We now present a more formal discussion what we mean by a path.
A path is a description of a sequence of states that could describe a sequence of observations
of the evolution of the quantum state.  In particular, we take a path to be an ordered pair $(\{\nu\},\{s\})$, where $\{\nu\}$ is a list of labels that describe the instantaneous eigenstates of $H(s)$ and $\{s\}$ specifies the time that the system transitions to that state.

We require that no two consecutive labels in a path are identical, ie., $\nu_q\ne \nu_{q+1}$ for all $q$.
With this restriction, it is clear that the elements of $\{s\}$ correspond to the times at which the system jumps from one state to another.  We also take the sequence $s$ to be a strictly increasing sequence of dimensionless times taken from the interval $[0,1]$.  We take these times to be strictly increasing because paths that move backwards in time, or are in two different states at the same time, cannot describe a physically realistic sequence of measurements of the state in its instantaneous eigenbasis.    As a final note, we first consider the case where each path is composed of at most $L+1$ distinct states and times, where the times in $\{s\}$ are taken to be integer multiples of $1/L$ to simplify the analysis, similarly to Ambainis and Regev~\cite{ambainis:adiabatic}.  Unlike the formulation of their result, we take the limit as $L$ approaches infinity early in our analysis thereby removing $L$ from our final results.

As an example of a path, consider the evolution for a three-state system with $L=4$ given in Fig.~\ref{fig:jumpdiagram}a: $\ket{0(0)}\rightarrow\ket{1(1/2)}\rightarrow\ket{1(1)}$.  This sequence of states can be recorded as a path that is initially  in the state $\ket{0(0)}$ and jumps to the state $\ket{1(1/2)}$ at time $s=1/2$.
The resulting path $(\{\nu\},\{s\})$ is given by the sequences $\{\nu\}=\{0,1\}$ and $\{s\}=\{0,1/2\}$.
Equivalently, this path can be expressed as a diagram, as seen in Fig.~\ref{fig:jumpdiagram}a.

Diagrams are a convenient way to think about the paths that describe adiabatic evolution.  In our diagrams, an arrow denotes the evolution of the state as a function of time.  If the arrow runs along a line, which denotes a particular energy level, then it implies that the system remains in that instantaneous eigenstate throughout that portion of the evolution. If the arrow connects two energy levels, then it implies that a transition, or jump, between energy levels occurs at that time.  As level crossings are forbidden in our analysis, we do not change the relative spacing between the energy levels as the eigenvalue gap changes with time.  Such diagrams are useful because they convey the intuition of what is being described by our more mathematical notation, which uses a tuple of sequences to describe the path.  A similar diagrammatic notation is also presented by MMP~\cite{mackenzie:adiabatic}.

In this work we classify paths by the number of jumps that they contain. As an example, consider the following evolution: $\ket{0(0)}\rightarrow\ket{1(1/2)}\rightarrow\ket{2(3/4)}\rightarrow\ket{0(1)}$.  The corresponding path is parameterized by $\{\nu\}=\{0,1,2,0\}$ and $\{s\}=\{0,1/2,3/4,1\}$.  This path jumps from one level to another three times and therefore we say that it contains three jumps.  Similarly, the previous example contained one jump.  The number of jumps present in a path is $|\{\nu\}|-1$.  If $(\{\nu\},\{s\})$ is a $q$-jump path then we define $\{\nu\}:=\{\nu_0,\nu_1,\ldots,\nu_q\}$ and similarly $\{s\}:=\{s_0,s_1,\ldots,s_q\}$, where $\nu_0=\gst$ and $s_0=0$.  We choose $\nu_0$ and $s_0$ in this fashion so that the \Th{p} jump occurs at time $s_p$, which makes our results easier to read.  A
diagram of the path in this example is given in Fig.~\ref{fig:jumpdiagram}b.

Additionally, some paths describe adiabatic evolution and others do not.  We call an $n$-jump
path ``adiabatic'' if the final and initial state in the path are the same.  For example, the path in Fig.~\ref{fig:jumpdiagram}b,
is an adiabatic path whereas the path in Fig.~\ref{fig:jumpdiagram}a is not.  We call any path that is not ``adiabatic''
a ``non-adiabatic'' path.  We use the notation $J_q$ to represent the set of all $q$-jump paths
 and $J_q'$ to represent the set of all $q$-jump non-adiabatic paths.
Non-adiabatic paths are crucially important to our work because the
error in the adiabatic approximation is given by the sum over all non-adiabatic paths.  This observation is especially relevant in light of the fact that the convergence problems noted by MMP~\cite{mackenzie:adiabatic} occur because their sums are not restricted to non-adiabatic paths.

\begin{figure}[t!]
\centering
{\includegraphics[width=1\textwidth]{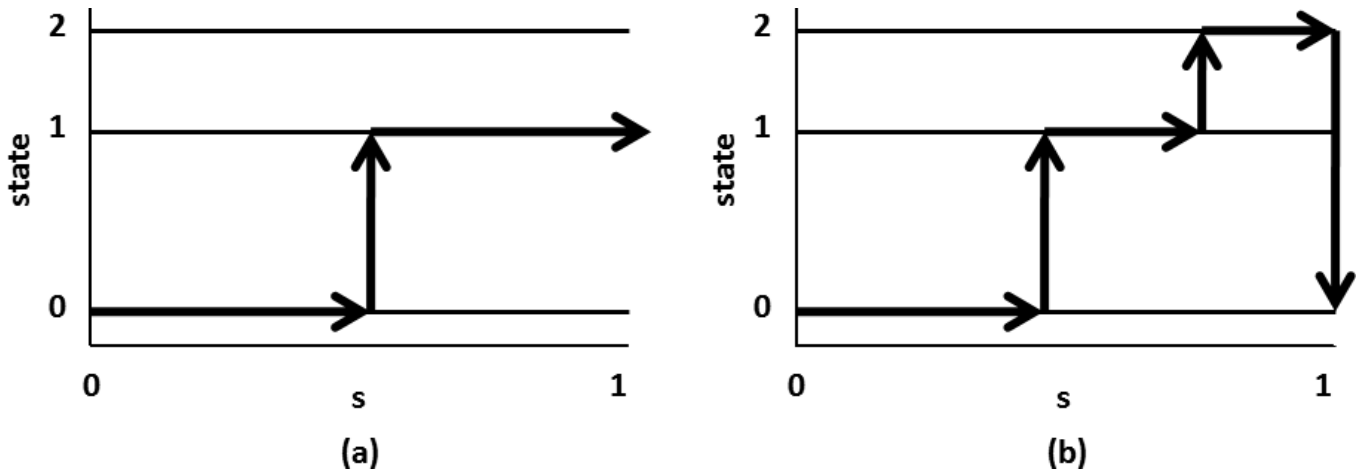}}
\caption[Examples of diagrammatic representations for path integrals.]{These diagrams are a representation of the paths used in the examples in this section.  Here the state labels refer to different values of $\nu$ and do not necessarily refer to energy.\label{fig:jumpdiagram}}
\end{figure}

We describe below, the sum over all such paths in the limit of continuous time.

\begin{definition}\label{def:nonadmeas}
Let $(\{\nu\},\{s\})$ represent an arbitrary path in $J_q$, in the limit as $L\rightarrow \infty$.  We then define the sum over the set all $q$-jump paths, $J_q$ to be
\begin{equation}
\int\mathrm{d}\left(J_q \right):=\sum_{\nu_1\ne \gst}\sum_{\nu_2\ne\nu_1}\cdots\sum_{\nu_{q-1}\ne\nu_q}\int_{0}^{1}\cdots\int_{0}^{s_3}\int_{0}^{s_2}\mathrm{d}s^{q},
\end{equation}
and similarly we define the sum over all $q$-jump non-adiabatic paths, $\Jfunc_q$, to be
\begin{equation}
\int\mathrm{d}\left(\Jfunc_q \right):=\sum_{\nu_1\ne \gst}\sum_{\nu_2\ne\nu_1}\cdots\sum_{{\nu_{q}\not\in \{\nu_{q-1},\gst\}}}\int_{0}^{1}\cdots\int_{0}^{s_3}\int_{0}^{s_2}\mathrm{d}s^{q},
\end{equation}
where $\mathrm{d}s^q=\mathrm{d}s_1\cdots\mathrm{d}s_q$.
\end{definition}

Now that we have introduced our notions of paths and the sum over all paths, we apply them to describe the time-evolution of quantum mechanical systems.

\section{Path Integral Representation of time-evolution}\label{sec:pathint}
In this section we show that by taking the continuous limit of a discrete approach that is similar to that of Ambainis and Regev~\cite{ambainis:adiabatic},
we obtain a path integral representation of the time-evolution operator that is similar to that of Mackenzie, Marcotte and Paquette (MMP)~\cite{mackenzie:adiabatic},
but without using perturbation theory.  The convergence issues noted in~\cite{mackenzie:adiabatic} are ameliorated by noting that not all $n$-jump
paths contribute to the error in the adiabatic approximation; however, the sum of all $n$-jump non-adiabatic paths does.  Our goal in this section
is to find an expression for the sum of all the non-adiabatic paths to the error in the adiabatic approximation.  Upper and lower
bounds for the contribution of these paths are proven in Sec.~\ref{sec:approximatepath}.


To simplify our proof, we choose the eigenstates of the Hamiltonian such that
\begin{equation}
\beta_{n,m}(s):=\braket{\dot n(s)}{m(s)}=\begin{cases} 0, &\text{if } E_n(s)\equiv E_m(s)\\
\frac{\bra{n(s)}\dot H(s) \ket{m(s)}}{E_n(s)-E_m(s)}, &\text{otherwise}\end{cases}\label{eq:betadef}
\end{equation}
similarly to~\cite{ambainis:adiabatic,mackenzie:adiabatic}.  Here $\beta_{n,m}(s)$ can be interpreted as a tunneling amplitude per unit of dimensionless time between the states $\ket{n}$ and $\ket{m}$.  The choice $\beta_{n,n}(s)=0$ corresponds to choosing the phases of the instantaneous eigenstates (which are dynamically irrelevant~\cite{balbook}) such that the geometric phase~\cite{berry:phase} is zero, which simplifies our proof dramatically.  We use this choice of eigenstates to state in the following theorem, which is the main result of this section.

\begin{theorem}\label{thm:pathint}
If $\|H(s)\|$ is bounded and $H(s)$ is piecewise-differentiable on the interval $[0,1]$ then $U(1,0)\ket{\gst(0)}$ is
\begin{align}
U(1,0)&\ket{\gst(0)}=\nn
&e^{-i\int_0^1E_\gst(\xi)\mathrm{d}\xi T}\ket{\gst(1)}+\int \left(e^{-i(\int_{s_1}^1E_{\nu_1}(\xi)\mathrm{d}\xi+\int_{0}^{s_1}E_{\gst}(\xi)\mathrm{d}\xi)T}\beta_{\nu_1,\gst}(s_1)\right)\ket{\nu_1(1)}\mathrm{d}\left(J_1 \right)\nonumber\\
&+\int\left(e^{-i(\int_{s_2}^1E_{\nu_2}(\xi)\mathrm{d}\xi+\int_{s_1}^{s_2}E_{\nu_1}(\xi)\mathrm{d}\xi+\int_{0}^{s_1}E_{\gst}(\xi)\mathrm{d}\xi)T}\beta_{\nu_2,\nu_1}(s_2)\beta_{\nu_1,\gst}(s_1)\right)\ket{\nu_2(1)}\mathrm{d}\left(J_2 \right)+\cdots,\label{eq:transmat8}
\end{align}
where $(\{\nu\},\{s\})$ represents an arbitrary path in $J_q$ and $\nu_i$ and $s_i$ refer to particular element in the sequences, and $\beta$ is defined in~\eqrefb{eq:betadef}.
\end{theorem}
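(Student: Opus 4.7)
The plan is to make rigorous the discretize-then-continuize program outlined in Section~\ref{sec:paths}. I partition $[0,1]$ into $L$ equal subintervals, replace $H(s)$ on each subinterval by its value at the right endpoint, and observe that the resulting piecewise-constant propagator $U_L(1,0)$ converges in operator norm to $U(1,0)$ as $L\to\infty$ (a standard Dyson/Trotter estimate using $\|H\|$ bounded and piecewise differentiability of $H$). Thus it suffices to evaluate
\begin{equation}
U_L(1,0)\ket{\gst(0)} \;=\; \prod_{k=1}^{L} e^{-iH(k/L)T/L}\ket{\gst(0)}
\end{equation}
and identify its limit with the right-hand side of~\eqrefb{eq:transmat8}. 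Inserting between adjacent factors the resolution of identity $\mathbb{1}=\sum_{\nu_k}\ketbra{\nu_k(k/L)}{\nu_k(k/L)}$ (each factor is diagonal in its own instantaneous eigenbasis) yields a sum over tuples $(\nu_1,\dots,\nu_L)$ of an explicit phase $\exp(-i\tfrac{T}{L}\sum_k E_{\nu_k}(k/L))$ times a product of overlaps $\braket{\nu_{k+1}((k+1)/L)}{\nu_k(k/L)}$, with the final vector $\ket{\nu_L(1)}$.

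Next, I classify each tuple by its number of jumps $q$, i.e., the indices $k_1<\cdots<k_q$ where $\nu_{k+1}\neq\nu_k$. Taylor expansion of $\ket{\nu(s)}$ together with the defining property $\braket{\dot n(s)}{n(s)}=0$ gives, at a non-jump position, $\braket{\nu_{k+1}((k+1)/L)}{\nu_k(k/L)}=1+O(1/L^2)$ and, at a jump position, $\braket{\nu_{k_p+1}((k_p+1)/L)}{\nu_{k_p}(k_p/L)}=\tfrac{1}{L}\beta_{\nu_{k_p+1},\nu_{k_p}}(k_p/L)+O(1/L^2)$, using the definition of $\beta$ in~\eqrefb{eq:betadef}. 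Collecting the phase factors into Riemann sums along each segment where $\nu$ is constant, the contribution of a tuple with $q$ jumps becomes
\begin{equation}
\frac{1}{L^q}\,e^{-i\Phi_L(\{\nu\},\{k/L\})T}\,\prod_{p=1}^{q}\beta_{\nu_p,\nu_{p-1}}(k_p/L)\,\ket{\nu_q(1)}\;+\;O(1/L^{q+1}),
\end{equation}
where $\Phi_L$ is the piecewise Riemann sum of the eigenvalues along the segments determined by the jumps. Re-summing by first fixing the label sequence $\{\nu_0,\dots,\nu_q\}$ and then summing over jump positions, the $1/L^q$ factor converts the sum over $k_1<\cdots<k_q$ into a Riemann sum for the ordered simplex $\int_0^1\!\int_0^{s_q}\!\cdots\int_0^{s_2}\!ds_1\cdots ds_q$, which is exactly the measure $\int\mathrm{d}(J_q)$ of Definition~\ref{def:nonadmeas}. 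The phase Riemann sums converge to the explicit integrals of $E_\nu$ displayed in~\eqrefb{eq:transmat8}.

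The main obstacle is justifying the exchange of $\lim_{L\to\infty}$ with the sum over $q$. For each finite $L$ the expansion is truly finite (at most $L$ jumps), but in the limit I need a dominated-convergence bound uniform in $L$. I obtain it from $|\beta_{n,m}(s)|\le \|\dot H\|/\gap_{\min}$, which gives the crude estimate
\begin{equation}
\Bigl\|(\text{$q$-jump contribution at level $L$})\Bigr\| \;\le\; \frac{(N-1)^q\,(\|\dot H\|/\gap_{\min})^q}{q!},
\end{equation}
a term-wise summable majorant since $\sum_q x^q/q!$ converges for every $x\ge 0$. This legitimates passing to the limit term by term, turns each finite $q$-jump block into the corresponding path integral $\int(\cdots)\mathrm{d}(J_q)$, and yields~\eqrefb{eq:transmat8}. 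The restriction to \emph{non-adiabatic} paths is not needed here because the present identity is for $U(1,0)\ket{\gst(0)}$ itself; the non-adiabatic restriction surfaces only when one projects onto $\openone-P_\gst(1)$ in later sections.
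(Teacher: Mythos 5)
Your proof follows essentially the same route as the paper's: discretize the evolution via the Trotter formula, insert instantaneous-eigenbasis resolutions of identity, classify the resulting tuples by their number of jumps, Taylor-expand the overlaps using $\braket{\dot n(s)}{n(s)}=0$ to extract the $\beta/L$ factors at jump positions (this is exactly the content of the paper's Lemma~\ref{lem:projector}), and pass to the $L\to\infty$ limit to turn the ordered sums over jump positions into the simplex path integrals of Definition~\ref{def:nonadmeas}. The only substantive difference is that you explicitly justify interchanging the $L\to\infty$ limit with the sum over the jump number $q$ via a summable majorant built from $|\beta_{n,m}|\le\|\dot H\|/\gap_{\min}$, a convergence step the paper leaves implicit, so this is a refinement of, not a departure from, the paper's argument.
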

The series in \eqrefb{eq:transmat8} has an intuitive interpretation.  It states that the time-evolution is a weighted sum over all paths that
could describe the time-evolution of the quantum system, wherein the weight of a particular path is given by the products of the
 tunneling amplitudes multiplied by the dynamic phase picked up in that path.  Adiabatic behavior occurs because in the limit of large $T$, the phases that are achieved by each path connecting $\ket{\gst(0)}$ and $\ket{\nu(1)}$ for $\nu\ne \gst$, vary considerably.  Similarly, we expect that the contribution of the sum of all paths that begin in $\ket{\gst(0)}$ and end in $\ket{\gst(1)}$ to be suppressed given that the path contains two or more jumps.
 Hence we expect that interference effects will cause their contribution to become diminished in this limit, leaving only the
 zero jump path which links $\ket{\gst(0)}$ to $\ket{\gst(1)}$.  This phase cancelation effect is demonstrated in Fig.~\ref{fig:combphase}.
 \begin{figure}[t!]
\centering
{\includegraphics[width=1\textwidth]{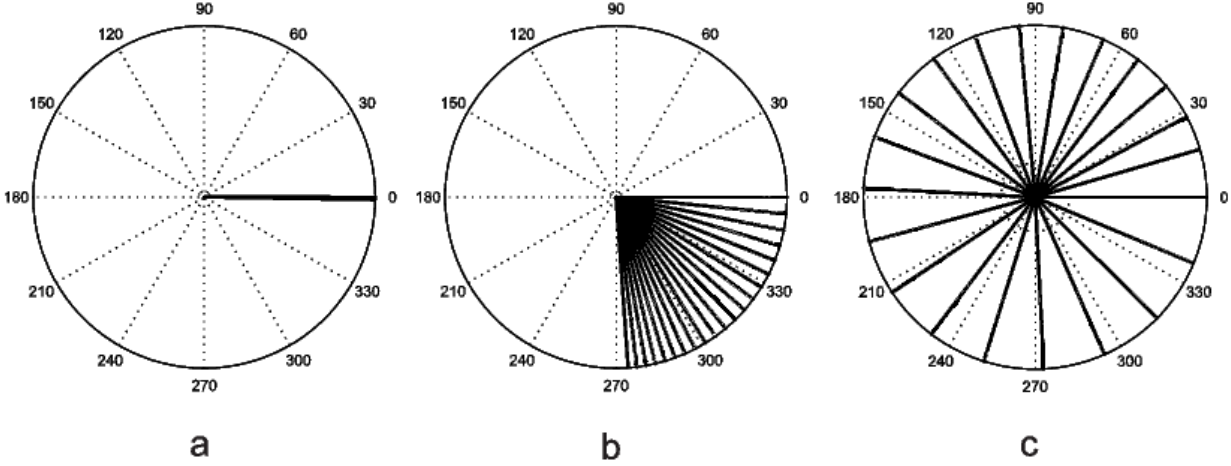}}
\caption[Phase interference effects for one-jump paths.]{This figure is a plot of the phase of states output by $21$ one-jump paths that jump at time $s_1=\{0,1/20,2/20,\ldots,1\}$
from the ground state and the first excited state for a Hamiltonian with $E_\gst(s)=0$ and $E_1(s)=1+s$, for $T=0.01$, $T=1$ and $T=4$ in a, b and c respectively.  Here each solid line is a unit vector in $\mathbb{C}$ that is oriented in the direction of $\exp(-i\int_{s_1}^1 E_1(\xi)-E_\gst(\xi)\mathrm{d}\xi T)$.  This plot shows that as $T$ increases the vectors tend to become uniformly distributed throughout the unit circle, suggesting that the sum over all paths becomes less substantial as $T$ increases.\label{fig:combphase}}
\end{figure}

We use these properties of the eigenstates
to show the following lemma, which is used to evaluate the products of projectors that appear in our
proof of Theorem~\ref{thm:pathint}.
 \begin{lemma}\label{lem:projector}
Let $(\{\nu\},\{s\})$ be a path in $J_q$ with $\{s\}=\{0,j_1/L,\ldots,j_{q}/L\}$ for an integer value of $L$ then,
\begin{align}
&\left(\prod_{j=j_{q-1}+1}^{j_q}P_{\nu_q}\left(\frac jL\right)\right)\cdots\left(\prod_{j=1}^{j_1}P_{\nu_1}
\left(\frac jL\right)\right)=\nn&\qquad\qquad\ket{\nu_q({j_{q}}/L)}\bra{\nu_1(0)}\Biggr(\frac{\prod_{\ell=1}^{q-1}\beta_{\nu_{\ell+1},\nu_{\ell}}
(\frac{j_{\ell}}L)}{L^{q-1}}+O(j_q/L^{q+1})\Biggr)\label{eq:lemmaproduct1},
\end{align}
where each $P_{\nu_i}(j/L)=\ket{\nu_i(j/L)}\bra{\nu_i(j/L)}$ projects onto the ray spanned by the state $\ket{\nu_i(j/L)}$.
\end{lemma}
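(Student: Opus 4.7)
The proof hinges on the rank-one structure of the projectors: since each $P_{\nu_k}(j/L) = \ket{\nu_k(j/L)}\bra{\nu_k(j/L)}$ is rank one, any ordered product of such projectors collapses to the outer rank-one operator times a scalar product of consecutive overlaps. Applied to the left-hand side of the lemma, this reduces the identity to $\ket{\nu_q(j_q/L)}\bra{\nu_1(1/L)}$ multiplied by a single scalar $S$ equal to a product of $j_q - 1$ overlaps $\braket{\nu^{(j+1)}((j+1)/L)}{\nu^{(j)}(j/L)}$, where $\nu^{(j)}$ denotes the label attached to the $j$-th projector in the ordered product.

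I would then separate the $j_q - 1$ overlaps into two groups. There are $j_q - q$ \emph{same-state} overlaps of the form $\braket{\nu_k((j{+}1)/L)}{\nu_k(j/L)}$, arising between adjacent projectors inside a single block, and $q-1$ \emph{jump} overlaps $\braket{\nu_{\ell+1}((j_\ell{+}1)/L)}{\nu_\ell(j_\ell/L)}$ arising at transitions between consecutive blocks. Taylor expanding each overlap about its right-hand time argument, a same-state overlap becomes $1 - (1/L)\braket{\nu_k}{\dot\nu_k} + O(1/L^2) = 1 + O(1/L^2)$, because the gauge choice $\beta_{\nu_k,\nu_k}=0$ forces $\braket{\nu_k}{\dot\nu_k}=0$; a jump overlap becomes $0 + (1/L)\beta_{\nu_{\ell+1},\nu_\ell}(j_\ell/L) + O(1/L^2)$, because orthogonality of distinct instantaneous eigenstates kills the zeroth-order term and the first-order term reproduces the definition of $\beta$ from~\eqrefb{eq:betadef}.

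Multiplying the $q-1$ jump factors produces the leading term $\prod_{\ell=1}^{q-1}\beta_{\nu_{\ell+1},\nu_\ell}(j_\ell/L)/L^{q-1}$, while the $j_q - q$ same-state factors contribute a multiplicative $(1 + O(1/L^2))^{j_q - q} = 1 + O(j_q/L^2)$. A final Taylor step replaces $\bra{\nu_1(1/L)}$ by $\bra{\nu_1(0)}$, with the difference absorbed into the error term. The principal technical obstacle is the careful bookkeeping of the $O(1/L^2)$ corrections, both within blocks and at jumps, so that the accumulated absolute error is exactly the stated $O(j_q/L^{q+1})$ after multiplication by the $L^{-(q-1)}$ leading prefactor. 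The cleanest tool for this is the elementary inequality $|\prod_i(1+\epsilon_i) - 1| \le \exp(\sum_i|\epsilon_i|) - 1$, applied with uniform bounds $|\epsilon_i| \le C/L^2$ that follow from $H$ being three-times differentiable and the non-degeneracy of the spectrum guaranteeing that all $\beta_{n,m}$ and the second derivatives of $\ket{\nu(s)}$ are uniformly bounded on $[0,1]$.
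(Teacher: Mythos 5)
Your proof is correct and takes essentially the same route as the paper's: collapse the rank-one projector product into a string of consecutive overlaps, Taylor-expand each overlap, use the gauge choice $\braket{\dot\nu}{\nu}=0$ to reduce the same-state factors to $1+O(1/L^2)$, and use orthogonality together with the definition of $\beta$ in~\eqrefb{eq:betadef} for the jump factors. The paper merely packages this identical computation as an induction on the number of jumps $q$ (one block at a time), whereas you expand the whole product at once and group the overlaps; the mathematical content, including the slight looseness in bookkeeping the $O(1/L^2)$ corrections against the stated $O(j_q/L^{q+1})$ remainder, is the same.
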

\begin{proof}
We prove the lemma by performing induction over $q$, which we use to denote the number of jumps in the paths in question.
Our first step is to prove the base case, $q=1$.
Using Taylor's theorem, and the fact that the eigenstates obey $\braket{\dot \nu_\ell}{\nu_\ell}=0$, it follows that for any $\nu_\ell$ and $s$,
\begin{equation}
\braket{\nu_\ell(s+1/L)}{\nu_\ell(s)}=1+O(1/L^2).\label{eq:lem1taylor}
\end{equation}
This implies that for any $j_\ell$
\begin{equation}
\prod_{j=j_{\ell-1}+1}^{j_{\ell}} P_{\nu_\ell}(j/L)=\ket{\nu_\ell(j_{\ell}/L)}\bra{\nu_\ell([j_{\ell-1}+1]/L)}\left(1+O\left([j_{\ell}-j_{\ell-1}-1]/L^2 \right)\right).\label{eq:lem1base2}
\end{equation}
We demonstrate the base case, $q=1$, of~\eqrefb{eq:lemmaproduct1} by substituting $\ell=1$ and $j_0=0$ into~\eqrefb{eq:lem1base2}.

Now let us assume that the result of Lemma~\ref{lem:projector} applies for the case
\begin{align}
&\left(\prod_{j=j_{q-2}+1}^{j_{q-1}}P_{\nu_{q-1}}\left(\frac jL\right)\right)\cdots\left(\prod_{j=1}^{j_1}P_{\nu_1}
\left(\frac jL\right)\right)\nn&\qquad\qquad=\ket{\nu_{q-1}({j_{q-1}}/L)}\bra{\nu_1(0)}\left(\frac{\prod_{\ell=1}^{q-2}\beta_{\nu_{\ell+1},\nu_{\ell}}
(\frac{j_{\ell}}L)}{L^{q-1}}+O(j_{q-1}/L^{q})\right).\label{eq:lemma1assumption}
\end{align}
By using~\eqrefb{eq:lemma1assumption} and~\eqrefb{eq:lem1base2} we find that
the left hand side of~\eqrefb{eq:lemmaproduct1} becomes
\begin{align}
\ket{\nu_{q}(j_{q}/L)}\braket{\nu_q([j_{q-1}+1]/L)}{\nu_{q-1}({j_{q-1}}/L)}\bra{\nu_1(0)}\left(\frac{\prod_{\ell=1}^{q-2}\beta_{\nu_{\ell+1},\nu_{\ell}}
(\frac{j_{\ell}}L)}{L^{q-1}}+O(j_{q}/L^{q})\right).\label{eq:lemma12ndlast}
\end{align}
Taylor's Theorem and the fact that, from the definition of a path, $\nu_q\ne\nu_{q-1}$ imply,
\begin{align}
\braket{\nu_q([j_{q-1}+1]/L)}{\nu_{q-1}({j_{q-1}}/L)}&=\frac{1}{L}\braket{\dot\nu_q(j_{q-1}/L)}{\nu_{q-1}({j_{q-1}}/L)}+O(1/L^2).\\
&=\frac{\beta_{\nu_q,\nu_{q-1}}(j_{q-1}/L)}{L}+O(1/L^2).\label{eq:lemma1last}
\end{align}
We substitute~\eqrefb{eq:lemma1last} into~\eqrefb{eq:lemma12ndlast}.  The induction step on $q$ follows, completing our proof of the lemma.
\end{proof}

Lemma~\ref{lem:projector} provides an expression that can be used to evaluate products of projection operators.
These results allow us to prove Theorem~\ref{thm:pathint} by discretizing the time-evolution and
examining the evolution in each time step in the basis formed by the instantaneous eigenstates of $H(s)$.  We use projectors to
transform the basis from the previous time step to the basis at the current time step.  The resulting expression for the time-evolution is given as a weighted sum of products of projectors, which we then simplify
by using Lemma~\ref{lem:projector}.  Our proof of Theorem~\ref{thm:pathint} is given below.  

\begin{proofof}{Theorem~\ref{thm:pathint}}
It follows from using the Trotter formula~\cite{trotter} for a fixed value of $L$ that if
$H(s)$ is piecewise differentiable then we can approximate $H(s)$ by a piecewise constant Hamiltonian to find
\begin{align}
U(1,0)&=\prod_{j=0}^L\left(e^{-iH(j/L)T/L} \right)+O(1/L),\nonumber\\
&=\prod_{j=0}^L\left(\sum_{\nu=0}^{N-1}P_\nu(j/L)e^{-iE_\nu(j/L)T/L} \right)+O(1/L),\label{eq:productsum}
\end{align}
where for any $s\in[0,1]$ and $\nu\in \{0,1,\ldots,N-1\}$, $\ket{\nu(s)}$ is an eigenstate of $H(s)$ and $P_\nu(s):=\ket{\nu(s)}\bra{\nu(s)}$.
We then expand~\eqrefb{eq:productsum} to find

\begin{align}\label{eq:transmat3}
U(1,0)\ket{\gst(0)}&=\sum_{\nu_1=0}^{N-1}\cdots \sum_{\nu_L=0}^{N-1}\prod_{j=1}^Le^{-iE_{\nu_j}(j/L)T/L} P_{\nu_j}(j/L)\ket{\gst(0)} +O(1/L).
\end{align}
Equation~\ref{eq:transmat3} is a weighted sum over every possible
path that can describe the evolution of the quantum system over the time interval.  We group the paths in the sum by the number of jumps
that comprise them and apply the result of Lemma~\ref{lem:projector}
to find,

\begin{align}
U(1,0)\ket{\gst(0)}=&e^{-i(\sum_{j=1}^LE_{0}(j/L)T/L}\ket{\gst(1)}\nn
&+\sum_{\nu_1\neq \gst}\sum_{j_1=1}^L\frac{e^{-i(\sum_{j=j_1}^LE_{\nu_1}(j/L)+\sum_{j=0}^{j_1-1}E_{\gst}(j/L))T/L}\beta_{\nu_1,\gst}(j_1/L)}{L}\ket{\nu_1(1)}\nonumber\\
&+\sum_{\nu_2\neq \nu_1}\sum_{\nu_1\neq\gst}\sum_{j_2=2}^{L}\sum_{j_1=1}^{j_2-1}{e^{-i(\sum_{j=j_2}^LE_{\nu_1}(j/L)+\sum_{j=j_1}^{j_2-1}E_{\nu_1}(j/L)+\sum_{j=0}^{j_1-1}E_{\gst}(j/L))T/L}}\nn &\qquad\times\frac{\beta_{\nu_2,\nu_1}(j_2/L)\beta_{\nu_1,\gst}(j_1/L)}{L^2}\ket{\nu_2(1)}\nonumber\\
&+\cdots+O(1/L).\label{eq:transmat4}
\end{align}
Equation~\ref{eq:transmat4} is similar to that of Theorem~\ref{thm:pathint}, except instead of time-integrals we have discrete sums over time and we also have an $O(1/L)$ term that represents the error in simulating the adiabatic evolution using a piecewise constant Hamiltonian.  We replace many of the sums in~\eqrefb{eq:transmat4} by integrals, using the fact that \begin{equation}
\sum_{j=s_1L}^{s_2L} f(j/L)/L=\int_{s_1}^{s_2}f(s)\mathrm{d}s+O(1/L),\end{equation}
and eliminate the simulation error, by examining~\eqrefb{eq:transmat4} in the limit as $L\rightarrow\infty$.  This demonstrates that the sum of all one- and two-jump paths approach the path integrals in Theorem~\ref{thm:pathint}.  The analogous results for paths that contain three or more jumps follow from the exact same reasoning, completing our proof of the theorem.
\end{proofof}

We have introduced our concept of path-integrals in this section, and have found an expression for the time-evolution operator in terms of these path integrals.
Our next step is to approximate the contribution of the non-adiabatic paths in~\eqrefb{eq:transmat8}. This gives us the error in the adiabatic approximation.
We use the results of the following section to find such a bound, which we give in Sec.~\ref{sec:approximatepath}.


\section{A Recursive Expression for Path Integrals}\label{sec:recursivepath}

Although an upper bound on the contribution one-jump non-adiabatic paths to the error in the adiabatic approximation can be directly calculated from Theorem~\ref{thm:pathint},
it is difficult to upper bound the contribution of paths that contain more than one jump using that strategy.  To do so, we
use the fact that every $m$-jump path is a concatenation of $m$ one-jump paths.  We use this property to find a recursion relation
that allows us to express the contribution of an $m$-jump non-adiabatic path in terms of the contribution of an $(m-1)$-jump non-adiabatic path (this fact follows from our definition of paths that we introduce in Sec.~\ref{sec:paths}).
Our error bounds then follow from summing a geometric series that results from unfolding these recursion relations.

Using the series in~\eqrefb{eq:transmat8} and neglecting all paths that end in $\gst$ (because such paths do not contribute to the approximation error), the error in the adiabatic approximation can be expressed as
\begin{equation}
\big(\left(\openone-P_{\gst}(1) \right)U(1,0)\big)\ket{\gst(0)}=\sum_{q=1}^\infty \left(\sum_{\nu_1\ne \gst}\sum_{\nu_2\ne\nu_1} \cdots\sum_{\nu_{q}\not\in\{\nu_{q-1},\gst\}} e^{-i\int_0^1E_{\nu_q}(s)\mathrm{d}sT}\Ffunc_{\nu_q}(1)\ket{\nu_q(1)}\right),
\end{equation}
where $\gap_{\mu,\nu}(s)$ represents the eigenvalue gap between the instantaneous eigenstates $\ket{\mu(s)}$ and $\ket{\nu(s)}$ and $\Ffunc_{\nu_q}$ represents the sum over all times as given by
\begin{align}
\Ffunc_{\nu_q}(1)&=\int_{0}^{1}e^{-i\int_{0}^{s_q}\gap_{\nu_{q-1},\nu_q}(\xi)\mathrm{d}\xi T}\beta_{\nu_q,\nu_{q-1}}(s_q)\nonumber\\
&\times\int_{0}^{s_{q}}e^{-i\int_{0}^{s_{q-1}}\gap_{\nu_{q-2},\nu_{q-1}}(\xi)\mathrm{d}\xi T}\beta_{\nu_{q-1},\nu_{q-2}}(s_{q-1})
\times\cdots\nonumber\\
&\times\int_{0}^{s_{2}}e^{-i\int_{0}^{s_1}\gap_{\gst,\nu_1}(\xi)\mathrm{d}\xi T}\beta_{\nu_1,\gst}(s_1)\mathrm{d}s_1\cdots\mathrm{d}s_{q}.\label{eq:Ffuncdef}
\end{align}
The integrals in~\eqrefb{eq:Ffuncdef} are expressed recursively as
\begin{align}\label{eq:jumpsec2}
\Ffunc_{\nu_q}(x)&=\int_0^x e^{-i\int_{0}^{s}\gap_{\nu_{q-1},\nu_q}(\xi)\mathrm{d}\xi T}\beta_{\nu_q,\nu_{q-1}}(s)\Ffunc_{\nu_{q-1}}(s)\mathrm{d}s,\nonumber\\
\Ffunc_{\nu_1}(x)&=\int_0^x e^{-i\int_{0}^{s}\gap_{\gst,\nu_1}(\xi)\mathrm{d}\xi T}\beta_{\nu_1,\gst}(s)\mathrm{d}s.
\end{align}
A similar recursive expansion to~\eqrefb{eq:jumpsec2} is found by Sun in~\cite{sun:higherorder_adiabatic}.  In the following section we find higher-order approximations to the error in the adiabatic approximation by using integration by parts to approximate the recursion relations in~\eqrefb{eq:jumpsec2}.

\section{Approximating Path Integrals}\label{sec:approximatepath}
We have argued qualitatively that quantum interference
effects are responsible for the emergence of adiabatic behavior in
quantum systems.  In this section we provide quantitative bounds
for this phase cancelation effect.  We first provide an example of how integration by parts can be used to approximate the phase cancelation effect for one-jump paths, and then use the same strategy to build
an asymptotic expansion for the sum over all $q$-jump non-adiabatic paths in powers of $T^{-1}$, for every $q>0$.
We then sum these expressions over all values of $q$ and obtain bounds for the approximation
error.  We then prove Theorem~\ref{thm:jumpbound} using these bounds for the error.


\subsection{Approximating Integrals Using Integration By Parts}\label{subsec:one-jumpexamp}
In this subsection we provide an example that shows how integration by parts can be used to estimate the contribution of all one-jump non-adiabatic paths to the error in the adiabatic approximation.  The purpose of this introduction is to provide a simple example of approximating path integrals that will help explain our approach for bounding the contribution of $q$-jump paths to the approximation error, which we address in the following subsection.

Integration by parts is a technique that is commonly used to approximate the integrals that appear in proofs of the adiabatic approximation~\cite{mackenzie:adiabatic,ruskai:adiabatic}.  Unlike other methods,
such as the method of steepest descent, integration by parts gives an exact expression for the integral.  Specifically,
if $f:\mathbb{R}\mapsto \mathbb{C}$ and $g:\mathbb{R}\mapsto\mathbb{C}$ are differentiable functions and if $\partial_s{g(s)}$ is non-zero in the
region of integration then~\cite{bender:advancedmath}
\begin{equation}
\int_0^x f(s)e^{\lambda g(s)}\mathrm{d}s=\left.f(s)\left(\lambda \diff{g(s)}{s}\right)^{-1}e^{\lambda g(s)}\right|_0^x-\int_0^x\left(\diff{}{s}\left[ f(s)\left(\lambda\diff{ g(s)}{s}\right)^{-1}\right]\right)e^{\lambda g(s)}\mathrm{d}s.\label{eq:intpartsgen}
\end{equation}
Equation~\ref{eq:intpartsgen} gives the leading order behavior of the integral in the limit of large $\lambda$ as well as an exact expression for the remainder term (which can also be approximated using integration by parts) given that $\partial_s g(s)\ne 0$ for all $s\in[0,x]$.  For this reason, we assume that the minimum eigenvalue gap between any two non-degenerate eigenvectors is non-zero because, if it is ever zero, then
integration by parts will fail to approximate~\eqrefb{eq:transmat8}.  In such circumstances, an alternative method would have to be used to analyze the integral.

We approximate the contribution of all one-jump paths to the error in the adiabatic approximation as an example using integration by parts to bound a path integral.
The contribution of all one-jump paths to the approximation error is
\begin{align}
\int \ket{\nu_1(1)}\beta_{\nu_1,\gst}(s)e^{-i\int_0^s \gap_{\gst,\nu_1}(\xi)\mathrm{d}\xi T}\mathrm{d}(J_1')&= \sum_{\nu_1\ne\gst}\ket{\nu_1(1)}\int_0^1 \beta_{\nu_1,\gst}(s)e^{-i\int_0^s \gap_{\gst,\nu_1}(\xi)\mathrm{d}\xi T}\mathrm{d}s.
\end{align}
We apply integration by parts to obtain,
\begin{align}
\sum_{\nu_1\ne \gst} \ket{\nu_1(1)}\left(\frac{\beta_{\nu_1,\gst}(s)e^{-i\int_0^s\gap_{\gst,\nu_1}(\xi)\mathrm{d}\xi T}}{-i\gap_{\gst,\nu_1}(s)T}\Biggr|_{s=0}^1-\int_0^1 \left(\diff{}{s}\left[\frac{\beta_{\nu_1,\gst}(s)}{-i\gap_{\gst,\nu_1}(s) T} \right]\right)e^{-i\int_0^s\gap_{\gst,\nu_1}(\xi)\mathrm{d}\xi T}\mathrm{d}s\right).\label{eq:intparts1j}
\end{align}
Using $\|\dot H\|:=\max_s\|\dot H(s)\|$ and $\max_{s}\|\ddot H(s)\|\in O(1)$ we find that~\eqrefb{eq:intparts1j} has the upper bound
\begin{equation}
\left\|\int \ket{\nu_1(1)}\beta_{\nu_1,\gst}(s)e^{-i\int_0^s \gap_{\gst,\nu_1}(\xi)\mathrm{d}\xi T}\mathrm{d}(J_1') \right\|\le 2\frac{\|\dot H\|}{\min_{\nu_1,s}[\gap_{\nu_1,\gst}(s)]^2 T}+ O(1/T^2),\label{eq:loworderapprox}
\end{equation}
which is a commonly used estimate of the error in the adiabatic approximation.
If the contribution of all non-adiabatic paths with two or more jumps is negligible,
then~\eqrefb{eq:loworderapprox} implies that the standard estimate for the approximation error holds~\cite{mackenzie:adiabatic}.

Higher-order estimates for the contribution of the one-jump terms can be found by applying integration by parts to approximate the remainder term.
The success of this strategy depends on the radius of convergence of the perturbation expansion.  Unfortunately, the series has a zero-radius of convergence.  This implies that arbitrarily high-order adiabatic theorems generated using our technique will only be useful in the limit as $T\rightarrow\infty$.  This is because the repeated derivatives that occur from using integration by parts to obtain an $O(1/T^k)$ approximation to $C_1$ are $O(k!/T^k)$.  The perturbation series for the one-jump path integrals will generally have a radius of convergence of zero as $k\rightarrow\infty$; suggesting that higher-order adiabatic approximations are often only useful in the limit of large $T$, as also noted in~\cite{rezakhani:adiabaticexponential}.

In the subsequent subsection we generalize the ideas presented in this subsection to find bounds for the approximation error that emerges due to \emph{every} non-adiabatic path, as opposed to the results in this subsection, which considered only paths with one jump.

\subsection{Error Bounds For The Adiabatic Approximation}\label{subsec:generaljump}

Our strategy for bounding the error in the adiabatic approximation is to use integration by parts to find upper bounds for the oscillatory integrals present in Theorem~\ref{thm:pathint}.  Our zeroth-order and first-order adiabatic approximations in Theorems~\ref{thm:zeroorder} and~\ref{thm:jumpbound}, directly follow from this upper bound, which is given in the following Theorem.


\begin{theorem}\label{thm:errorbounds}
Let $H:[0,1]\mapsto \mathbb{C}^{N\times N}$ be a Hamiltonian that is differentiable three times and has a minimum eigenvalue gap between
any two non-degenerate eigenvectors of $\gap_{\min}>0$.  Then, for any $T \ge 0$, the error in the adiabatic approximation obeys
\begin{align}
\left\|(\openone-P_{\gst}(1))U(1,0)\ket{\gst(0)}-\left.\sum_{\nu\ne\gst}\frac{e^{-i\phi_\nu}\ket{\nu(1)}\braket{\dot\nu(s_1)}{\gst(s_1)}e^{-i \int_0^{s_1}\gap_{\gst,\nu}(\xi)\mathrm{d}\xi T}}{-i\gap_{\gst,\nu}(s_1)T}\right|_{s_1=0}^{1}\right\|\leq&\frac{\mathbf{R}}{T^2},\label{eq:thmupperbound}
\end{align}
where $\phi_\nu=\int_0^1 E_\nu(\xi)\mathrm{d}\xi T$ and $\mathbf{R}$ is bounded above by
\begin{align}
\mathbf{R}\le&\frac{2\|\ddot H\|+\|\dddot H\|}{\gap_{\min}^3}+\frac{25\|\dot H\|\|\ddot H\|+16\|\dot H\|^2+\|\ddot H\|^2}{\gap_{\min}^4}+\frac{12\|\ddot H\|\|\dot H\|^2+118\|\dot H\|^3}{\gap_{\min}^5}+\frac{36\|\dot H\|^4}{\gap_{\min}^6}\nonumber\\
&+T^2\left[\left(1+\frac{\Gamma}{\|\dot H\|T}\right)(e^{\Gamma/(\gap_{\min}T)}-1)-\frac{\Gamma}{\gap_{\min}T}\right],
\end{align}
and $\Gamma=\left(\frac{6\|\dot H\|^{3}}{\gap_{\min}^{3}}+\frac{\|\dot H\|\|\ddot H\|}{\gap_{\min}^{2}}\right)+\frac{2\|\dot H\|^{2}}{\gap_{\min}^{2}}$, $\|\dot H\|$, $\|\ddot H\|$ and $\|\dddot H\|$ refer to the maximum values of the norms of the
first three derivatives of $H$ and $\gap_{\gst,\nu}(s)=E_{\gst}(s)-E_{\nu}(s)$.

\end{theorem}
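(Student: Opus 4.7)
The plan is to start from the non-adiabatic part of the path-integral expansion given in Theorem~\ref{thm:pathint}: namely,
\[
(\openone-P_{\gst}(1))U(1,0)\ket{\gst(0)}=\sum_{q\ge 1}\sum_{\text{paths in }\Jfunc_q}\!\! e^{-i\phi_{\nu_q}}\ket{\nu_q(1)}\Ffunc_{\nu_q}(1),
\]
with each $\Ffunc_{\nu_q}$ the nested oscillatory integral from~\eqrefb{eq:jumpsec2}. I would split this into the one-jump piece ($q=1$), which will contribute the explicit leading-order boundary term that already appears on the left of~\eqrefb{eq:thmupperbound}, and the multi-jump tail $q\ge 2$, which must be folded into $\mathbf{R}/T^2$.

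For $q=1$, the idea is to push the integration-by-parts calculation of Section~\ref{subsec:one-jumpexamp} one step further. The first application of~\eqrefb{eq:intpartsgen} isolates exactly the boundary term $[\beta_{\nu,\gst}(s)/(-i\gap_{\gst,\nu}(s)T)]\,e^{-i\int \gap\,T}|_0^1$ that is subtracted in the theorem, and leaves a remainder $\int_0^1\partial_s[\beta/(-i\gap T)]\,e^{-i\int\gap T}\mathrm{d}s$. Applying integration by parts again to this remainder produces new boundary terms of order $1/T^2$ whose numerators are bounded using $\partial_s\beta_{\nu,\gst}$ and $\partial_s^2\beta_{\nu,\gst}$; these derivatives are controlled in turn by $\|\dot H\|,\|\ddot H\|,\|\dddot H\|$ and $\gap_{\min}^{-1}$ via~\eqrefb{eq:betadef} and $\dot E_\nu=\bra{\nu}\dot H\ket{\nu}$. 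Careful bookkeeping yields the polynomial group $(2\|\ddot H\|+\|\dddot H\|)/\gap_{\min}^3+\cdots+36\|\dot H\|^4/\gap_{\min}^6$ in the statement of $\mathbf{R}$.

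For $q\ge 2$, I would exploit the recursion in~\eqrefb{eq:jumpsec2} by integrating by parts the \emph{outermost} integral of $\Ffunc_{\nu_q}(x)$, treating $\beta_{\nu_q,\nu_{q-1}}(s)\Ffunc_{\nu_{q-1}}(s)$ as the slowly varying amplitude against the oscillatory $e^{-i\int_0^s\gap_{\nu_{q-1},\nu_q}\mathrm{d}\xi T}$. Because $\Ffunc_{\nu_{q-1}}(0)=0$, only the $s=x$ boundary term survives; together with the product-rule expansion of the remainder integral, this gives a recursive inequality of the form
\[
\|\Ffunc_{\nu_q}\|_\infty \le \frac{a}{T}\|\Ffunc_{\nu_{q-1}}\|_\infty+\frac{b}{T}\int_0^1\|\Ffunc_{\nu_{q-1}}(s)\|\,\mathrm{d}s,
\]
where $a,b$ are polynomials in $\|\dot H\|,\|\ddot H\|$ and $\gap_{\min}^{-1}$ that assemble into the constant $\Gamma$. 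The crucial step in this part is handling the intermediate state sums without introducing a factor of $\sqrt{N-1}$ per jump: I would use $\sum_\mu|\beta_{\mu,\nu}(s)|^2\le\|\dot H\|^2/\gap_{\min}^2$ (since the left side is $\|\dot \nu(s)\|^2$) in combination with Cauchy--Schwarz, so that summing over intermediate labels contributes only norm factors already absorbed into $a,b$. Unfolding the recursion and using the fact that the nested time integrations give $1/q!$, the contribution of all $q\ge 2$ non-adiabatic paths then telescopes into a series $\sum_{q\ge 2}(\Gamma/(\gap_{\min}T))^q/q!$ (with the extra $\Gamma/(\|\dot H\|T)$ factor accounting for the first rung of the recursion), which sums exactly to $(1+\Gamma/(\|\dot H\|T))(e^{\Gamma/(\gap_{\min}T)}-1)-\Gamma/(\gap_{\min}T)$.

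The main obstacle I anticipate is the state-sum bookkeeping in the multi-jump step: a naive bound $\sum_\nu|\beta_{\nu,\mu}|\le(N-1)\max|\beta|$ would destroy dimension-independence and prevent the geometric resummation from converging. Getting the explicit integer coefficients ($2,25,16,12,118,36,\ldots$) in $\mathbf{R}$ is then a mechanical but careful accounting of all Leibniz-rule terms generated by the two integrations by parts, which I would relegate to a direct expansion once the structural bound is established.
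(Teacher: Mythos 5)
Your overall architecture---isolate the one-jump boundary term by integration by parts, resum the multi-jump tail through the recursion of~\eqrefb{eq:jumpsec2} into the $\Gamma$-exponential, and keep the state sums dimension-independent---parallels the paper's proof (Lemmas~\ref{lem:jumpbound}--\ref{lem:intpartsresult}), and your Cauchy--Schwarz device $\sum_\mu|\beta_{\mu,\nu}|^2=\|\,\ket{\dot\nu}\|^2\le\|\dot H\|^2/\gap_{\min}^2$ is a legitimate substitute for the paper's resolution-of-identity trick in Lemma~\ref{lem:betaderiv}. However, there is a genuine gap in your treatment of the two-jump paths. In the paper, the bracket $\left(1+\frac{\Gamma}{\|\dot H\|T}\right)(e^{\Gamma/(\gap_{\min}T)}-1)-\frac{\Gamma}{\gap_{\min}T}$ accounts only for $\sum_{j\ge 3}\|C_j\|$: the crude recursion gives $\|C_{2m}\|\le \Gamma^m/(m!(\gap_{\min}T)^m)$, and the $m=1$ term $\Gamma/(\gap_{\min}T)$ is explicitly excluded because $C_2$ is bounded separately (Lemma~\ref{lem:intpartsresult}, Appendix~B) by applying integration by parts \emph{twice} to the double integral. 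That sharper step is what produces the $\|\ddot H\|^2/\gap_{\min}^4$, $12\|\ddot H\|\|\dot H\|^2/\gap_{\min}^5$ and $36\|\dot H\|^4/\gap_{\min}^6$ pieces of $\mathbf{R}$, and it is also where the non-degeneracy of $\ket{\gst}$ is needed, so that the combined phase $\gap_{\gst,\nu_1}+\gap_{\nu_1,\nu_2}=\gap_{\gst,\nu_2}$ never vanishes. Your plan instead attributes the entire polynomial part of $\mathbf{R}$ to the $q=1$ analysis, but two integrations by parts on a single-jump integral cannot generate terms with four powers of $\dot H$ or with $\|\ddot H\|^2$ (the paper's one-jump remainder $\mathbf{R}_0$ contains no such terms), while your crude recursion applied to $q=2$ yields only an $O(1/T)$ quantity of size $\Gamma/(\gap_{\min}T)$---exactly the term your closed form subtracts. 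As written, $C_2$ is therefore either counted at order $1/T$ (which destroys the claimed $\mathbf{R}/T^2$ form) or not accounted for at all; you need a dedicated second integration by parts on $C_2$.

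A smaller structural point: the single-step inequality $\|\Ffunc_{\nu_q}\|\le \frac aT\|\Ffunc_{\nu_{q-1}}\|+\frac bT\int_0^1\|\Ffunc_{\nu_{q-1}}(s)\|\,\mathrm{d}s$ is not what integration by parts on the outermost integral actually yields: the remainder contains $\partial_s\Ffunc_{\nu_{q-1}}(s)=e^{-i\kappa_{\nu_{q-2},\nu_{q-1}}(s)T}\beta_{\nu_{q-1},\nu_{q-2}}(s)\Ffunc_{\nu_{q-2}}(s)$, so the recursion necessarily mixes in $\Ffunc_{\nu_{q-2}}$; this is precisely why the paper's bookkeeping (its $g_q$ and $h_q$) descends two jumps per $1/T$ gained and why the binomial count produces $\Gamma^m/m!$ for $2m$ jumps, with odd paths picking up the extra factor $\Gamma/(\|\dot H\|T)$ from one final integration by parts. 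This is repairable, but your stated recursion would have to be corrected before the resummation goes through.
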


This theorem is important because it provides upper and lower bounds that, except in pathological cases, converge to the observed approximation error in the limit of large $T$.  As a result, these bounds are superior to the JRS bound~\cite{ruskai:adiabatic} in the limit of large $T$.  We compare these two bounds in greater detail in Sec.~\ref{sec:searchham}.

The proof of Theorem~\ref{thm:errorbounds} requires several steps, which we express in three lemmas.  
In order to abbreviate the statements of our Lemmas, we define the contribution of all $j$-jump non-adiabatic paths to
the error in the adiabatic approximation to be $C_j$.  Hence the
error is
\begin{equation}
(I-P_{\gst}(1))U(1,0)\ket{\gst(0)}=C_1+C_2+C_3+\cdots,
\end{equation}
where for example,
\begin{equation}
\sumOneJumpxpr:=\sumOneJump,
\end{equation}
and
\begin{equation}
\sumTwoJumpxpr:=\sumTwoJump.
\end{equation}
We have that $\nu_2\ne \gst$ in our expression for $C_2$ because any such path is an adiabatic path and thus does not contribute to the error.
 The following lemma contains our upper bounds for the contribution of $C_j$ for $j>2$, which is an important step towards proving Theorem~\ref{thm:errorbounds}.
\begin{lemma}\label{lem:jumpbound}
Let $H:[0,1]\mapsto \mathbb{C}^{N\times N}$ be a Hamiltonian that is twice differentiable, then the following bound holds
on the error in the adiabatic approximation:
\begin{align}
\|(\openone-P_{\gst}(1))U(1,0)\ket{\gst(0)}-\sumOneJumpxpr\|\leq&\|\sumTwoJumpxpr\|+\left(1+\frac{\Gamma}{\|\dot H\|T}\right)(e^{\Gamma/(\gap_{\min}T)}-1)-\frac{\Gamma}{\gap_{\min}T}.\label{eq:lemupperbound}
\end{align}
Here $\Gamma=\left(\frac{6\|\dot H\|^{3}}{\gap_{\min}^{3}}+\frac{\|\dot H\|\|\ddot H\|}{\gap_{\min}^{2}}\right)+\frac{2\|\dot H\|^{2}}{\gap_{\min}^{2}}$ and $\|\dot H\|$ and $\|\ddot H\|$ refer to the maximum values of the norms of the
first two derivatives of $H$ and $\gamma_{\min}$ is the minimum eigenvalue gap between any two non-degenerate eigenvalues.
\end{lemma}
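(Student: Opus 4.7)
The plan is to use Theorem~\ref{thm:pathint} to expand $(\openone-P_{\gst}(1))U(1,0)\ket{\gst(0)}$ as the infinite series $\sum_{q\ge 1}C_q$, where $C_q$ is the contribution of all $q$-jump non-adiabatic paths, isolate $C_1$ (which already appears on the left-hand side of~\eqrefb{eq:lemupperbound}) and $C_2$ (which appears explicitly on the right), and then bound the remaining tail $\sum_{q\ge 3}\|C_q\|$ by the given closed-form expression. By the triangle inequality,
\begin{equation}
\|(\openone-P_{\gst}(1))U(1,0)\ket{\gst(0)}-C_1\|\le \|C_2\|+\sum_{q\ge 3}\|C_q\|,
\end{equation}
so the bulk of the work is establishing a quantitative recursive bound on $\|C_q\|$ for $q\ge 3$ whose sum telescopes into the claimed expression.

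First, I would apply integration by parts to the outermost integral in the recursive definition of $\Ffunc_{\nu_q}$ from~\eqrefb{eq:jumpsec2}, treating the rapidly oscillating factor $e^{-iT\int_0^s \gap_{\nu_{q-1},\nu_q}(\xi)\mathrm{d}\xi}$ as the differential. The boundary term at $s=0$ vanishes because $\Ffunc_{\nu_{q-1}}(0)=0$ by construction, while the boundary term at $s=x$ yields a factor $1/(-i\gap_{\nu_{q-1},\nu_q}(x)T)$ multiplied by $\beta_{\nu_q,\nu_{q-1}}(x)\Ffunc_{\nu_{q-1}}(x)$. The residual integral produces the sum of $\partial_s[\beta_{\nu_q,\nu_{q-1}}(s)/\gap_{\nu_{q-1},\nu_q}(s)]\Ffunc_{\nu_{q-1}}(s)$ and $(\beta_{\nu_q,\nu_{q-1}}(s)/\gap_{\nu_{q-1},\nu_q}(s))\dot\Ffunc_{\nu_{q-1}}(s)$, each divided by $-iT$. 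The last term is exactly the integrand of $\Ffunc_{\nu_{q-1}}$ weighted by $\beta/\gap$, so it feeds back into the recursion.

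Second, I would estimate each factor uniformly in $s$. The key ingredients are $|\beta_{n,m}(s)|\le \|\dot H\|/\gap_{\min}$, a bound on $|\partial_s\beta_{n,m}(s)|$ of order $\|\ddot H\|/\gap_{\min}+2\|\dot H\|^2/\gap_{\min}^2$ obtained by differentiating~\eqrefb{eq:betadef}, and $|\partial_s(1/\gap_{\nu,\mu})|\le \|\dot H\|/\gap_{\min}^2$. Combining these with the sum over the intermediate eigenindex (where a completeness-type bound $\sum_m|\beta_{n,m}(s)|^2\le \|\dot H(s)\|^2/\gap_{\min}^2$ eliminates the need to count Hilbert-space dimension) yields a recursion of the schematic form
\begin{equation}
\sup_x\|C_q(x)\|\le \frac{\Gamma}{\gap_{\min}T}\cdot\sup_x\|C_{q-1}(x)\|,
\end{equation}
together with a boundary contribution carrying the prefactor $1+\Gamma/(\|\dot H\|T)$. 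Here the cubic term $6\|\dot H\|^3/\gap_{\min}^3$ entering $\Gamma$ arises from applying integration by parts a second time inside the derivative $\partial_s[\beta/\gap]$, while the remaining constants come directly from the first-order estimates on $\dot\beta$ and $\partial_s(1/\gap)$.

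Third, after unfolding the recursion starting from the base case $q=2$, the time-ordering constraint $0\le s_1\le\cdots\le s_q\le 1$ supplies the standard $1/(q-2)!$ factorials from the nested integrals, giving $\|C_q\|\lesssim \|C_2\|(\Gamma/(\gap_{\min}T))^{q-2}/(q-2)!$ up to the boundary-term correction. Summing over $q\ge 3$ then produces the exponential series, and collecting the boundary contribution (which carries the $\Gamma/(\|\dot H\|T)$ prefactor and shifts the index of summation by one) yields exactly $(1+\Gamma/(\|\dot H\|T))(e^{\Gamma/(\gap_{\min}T)}-1)-\Gamma/(\gap_{\min}T)$. The main obstacle will be the constant bookkeeping: integration by parts generates several distinct derivative contributions (from $\beta$, from $1/\gap$, and from $\Ffunc_{q-1}$), at least one of which must be re-expanded through a second integration by parts to obtain a genuine $O(1/T)$ contraction rather than a mere $O(1)$ estimate. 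Aligning these contributions with the factorials from nested time-ordering so that the tail sum produces exactly the stated closed form — rather than a strictly weaker exponential upper bound — will require careful identification of which derivative terms contribute at which factorial order in the expansion.
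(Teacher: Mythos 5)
Your overall skeleton---expanding the error as $\sum_{q\ge 1}C_q$ via Theorem~\ref{thm:pathint}, keeping $C_1$ and $C_2$ explicit, and bounding the tail by integrating the recursion \eqrefb{eq:jumpsec2} by parts and summing factorially decaying terms into an exponential---is the same as the paper's. But the central quantitative step fails as you state it. After one integration by parts, the term in which the derivative falls on $\Ffunc_{\nu_{q-1}}$ becomes, upon substituting the recursion, $\frac{\beta_{\nu_q,\nu_{q-1}}(s)\beta_{\nu_{q-1},\nu_{q-2}}(s)}{-i\gap_{\nu_{q-1},\nu_q}(s)T}\,e^{-i(\kappa_{\nu_{q-1},\nu_q}(s)+\kappa_{\nu_{q-2},\nu_{q-1}}(s))T}\Ffunc_{\nu_{q-2}}(s)$: it lives at level $q-2$, not $q-1$, and its combined phase oscillates with frequency $\gap_{\nu_{q-2},\nu_q}(s)$, which vanishes identically whenever $\nu_q=\nu_{q-2}$ (allowed, since only consecutive labels in a path must differ). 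That term therefore cannot be integrated by parts again to extract a further $1/T$; one factor of $1/T$ is all one can gain per \emph{two} jumps. Your schematic one-level contraction $\sup_x\|C_q(x)\|\le \frac{\Gamma}{\gap_{\min}T}\sup_x\|C_{q-1}(x)\|$, and the resulting estimate $\|C_q\|\lesssim \|C_2\|\,(\Gamma/(\gap_{\min}T))^{q-2}/(q-2)!$, do not follow from your own computation (they would give, e.g., $\|C_3\|=O(T^{-3})$, which is stronger than is generically true), and summing them yields $\|C_2\|(e^{\Gamma/(\gap_{\min}T)}-1)$, which is not the closed form in \eqrefb{eq:lemupperbound}.

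The paper's proof is organized precisely around this two-jump structure. It writes $\sum_{\nu}\Ffunc_{\nu_q}(x)\ket{\nu_q(x)}=g_q(x)-h_q(x)$ and bounds $\|g_q\|$ and $\|h_q\|$ by \emph{integrals} of $\|g_{q-2}\|+\|h_{q-2}\|$ (it is these nested integrals, not sup bounds, that supply the factorials), with coefficients $\frac{2\|\dot H\|^2}{\gap_{\min}^3T}$ and $\frac{\|\dot H\|\|\ddot H\|}{\gap_{\min}^3T}+\frac{6\|\dot H\|^3}{\gap_{\min}^4T}$ obtained from Lemma~\ref{lem:betaderiv}; your completeness-type remark for avoiding dimension factors is the right idea and is exactly what that lemma formalizes, whereas naively differentiating \eqrefb{eq:betadef} and applying the triangle inequality would introduce factors of $N$. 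Unfolding the two-step recursion $\lfloor q/2\rfloor$ times and applying the binomial theorem gives $\|C_{2m}\|\le \Gamma^m/(m!\,(\gap_{\min}T)^m)$, while an odd path with $2m+1$ jumps is treated as a $2m$-jump path concatenated with one final one-jump integral, which a separate integration by parts bounds by $\Gamma/(\|\dot H\|T)$. Summing even $q\ge 4$ gives $e^{x}-1-x$ and odd $q\ge 3$ gives $\frac{\Gamma}{\|\dot H\|T}(e^{x}-1)$ with $x=\Gamma/(\gap_{\min}T)$; this even/odd split, absent from your proposal, is what produces the prefactor $1+\Gamma/(\|\dot H\|T)$ and the subtracted $\Gamma/(\gap_{\min}T)$ in the lemma. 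The final bookkeeping you flagged as uncertain is thus not a matter of constants: it cannot be completed within a one-step-per-jump scheme.
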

Our proof of Lemma~\ref{lem:jumpbound} uses integration by parts to approximate $\|C_j\|$ for $j>2$.  To do so we must provide upper bounds
for the derivatives of $\beta$ that appear due to using integration by parts to evaluate the integrals.  The most straightforward way to approximate these derivatives is to differentiate~\eqrefb{eq:betadef} and use the triangle inequality on the result.  However, this simple approach leads to extraneous factors of $N$, which appear because of the sums over $\nu_q$.  We use a trick involving the resolution of unity to provide upper bounds for the
first two derivatives of $\beta$ that do not depend on $N$.  We state this result in the subsequent lemma, and will use it to prove Lemma~\ref{lem:jumpbound}.
\begin{lemma}\label{lem:betaderiv}
Let $H(s)$ be twice differentiable on $s\in I=[0,1]$ and let $s'\in I$, $\chi:J_{q-1}\mapsto \mathbb{C}$ be a bounded function, and for simplicity we define the argument of $\chi$ to be the path $\vec{\nu}:=(\{\nu_j:j=0,\ldots,q-1\},\{s_j:j=0,\ldots,q-1\})$ then
\begin{equation}
\left\|\sum_{\nu_q,\nu_{q-1},\ldots,\nu_1}\dot\beta_{\nu_{q},\nu_{q-1}}(s)\chi(\vec{\nu})\ket{\nu_{q}(s')}\right\|\le \left(\frac{4\|\dot H\|^2}{\gap_{\min}^2}+\frac{\|\ddot H\|}{\gap_{\min}}\right)\left\|\sum_{\nu_{q-1},\nu_{q-2},\ldots,\nu_1} \chi(\vec{\nu})\ket{\nu_{q-1}(s)}\right\|\label{eq:betaderiv}
\end{equation}
and
\begin{align}
&\Biggr\|\sum_{\nu_q,\nu_{q-1},\ldots,\nu_1}\ddot\beta_{\nu_{q},\nu_{q-1}}(s)\chi(\vec{\nu})\ket{\nu_{q}(s')}\Biggr\|\le \left(\frac{44\|\dot H\|^3}{\gap_{\min}^3}+\frac{12\|\dot H\|\|\ddot H\|}{\gap_{\min}^2}+\frac{\|\dddot H\|}{\gap_{\min}}\right)\left\|\sum_{\nu_{q-1},\nu_{q-2},\ldots,\nu_1} \chi(\vec{\nu})\ket{\nu_{q-1}(s)}\right\|.\label{eq:betaderiv2}
\end{align}
\end{lemma}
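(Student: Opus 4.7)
The plan is to interpret $\sum_{\nu_q}\dot\beta_{\nu_q,\nu_{q-1}}(s)\ket{\nu_q(s')}$ as the action of a bounded operator $A(s)$ on $\ket{\nu_{q-1}(s)}$, and then deduce~\eqrefb{eq:betaderiv} from the operator inequality $\|A(s)\eta\|\le\|A(s)\|_{\text{op}}\|\eta\|$ applied to the vector $\eta:=\sum_{\nu_{q-1},\ldots,\nu_1}\chi(\vec\nu)\ket{\nu_{q-1}(s)}$. The reduction $s'\to s$ on the left of~\eqrefb{eq:betaderiv} is free because $\{\ket{\nu_q(s')}\}_{\nu_q}$ is orthonormal: $\|\sum_{\nu_q}a_{\nu_q}\ket{\nu_q(s')}\|=\|\sum_{\nu_q}a_{\nu_q}\ket{\nu_q(s)}\|$ for any scalars $a_{\nu_q}$. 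With $A(s):=\sum_{n,m}\dot\beta_{n,m}(s)\ketbra{n(s)}{m(s)}$, the task thus reduces to the operator-norm bound $\|A(s)\|_{\text{op}}\le 4\|\dot H\|^2/\gap_{\min}^2+\|\ddot H\|/\gap_{\min}$.

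To establish this, apply the quotient rule to $\beta_{n,m}=\bra{n}\dot H\ket{m}/(E_n-E_m)$ and split $\dot\beta_{n,m}$ into four summands with numerators $\bra{\dot n}\dot H\ket{m}$, $\bra{n}\ddot H\ket{m}$, $\bra{n}\dot H\ket{\dot m}$, and $-\beta_{n,m}(\dot E_n-\dot E_m)$. Summing over $n\ne m$ with weight $\ket{n(s)}/(E_n-E_m)$ recasts each summand as the reduced resolvent $R_m^{-1}P_m^\perp:=\sum_{n\ne m}\ketbra{n(s)}{n(s)}/(E_n-E_m)$ applied to a specific vector: $B\dot H\ket{m}$, $\ddot H\ket{m}$, $\dot H\ket{\dot m}$, and $-[D,B]\ket{m}$ respectively, where $D$ denotes the diagonal part of $\dot H$ and the operator $B(s):=\sum_{n,m}\beta_{n,m}(s)\ketbra{n(s)}{m(s)}$ enters the first summand through the resolution-of-unity identity $\sum_n\ket{n(s)}\bra{\dot n(s)}=B(s)$, which is immediate from $\bra{\dot n}=\sum_k\beta_{n,k}\bra{k}$. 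Using $\|R_m^{-1}P_m^\perp\|_{\text{op}}\le 1/\gap_{\min}$, $\|\ket{\dot m}\|\le\|\dot H\|/\gap_{\min}$, $\|D\|\le\|\dot H\|$, and a Sylvester-type operator-norm bound $\|B\|_{\text{op}}\le\|\dot H\|/\gap_{\min}$, the four pieces contribute at most $\|\dot H\|^2/\gap_{\min}^2$, $\|\ddot H\|/\gap_{\min}$, $\|\dot H\|^2/\gap_{\min}^2$, and $2\|\dot H\|^2/\gap_{\min}^2$ respectively, summing to the required bound.

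The main obstacle is producing the $N$-independent bound $\|B\|_{\text{op}}\le\|\dot H\|/\gap_{\min}$: a naive Cauchy--Schwarz on $B=-\sum_m\ket{\dot m}\bra{m}$ incurs a $\sqrt{N}$ factor. The correct route is to observe that $B$ satisfies the Sylvester identity $[H,B]=\dot H-\sum_m\bra{m}\dot H\ket{m}\ketbra{m}{m}$ obtained by comparing matrix elements in the eigenbasis, and that $\mathrm{ad}_H$ has smallest nonzero eigenvalue $\gap_{\min}$ in absolute value on its range, so that $B$ inherits the operator-norm bound directly from its commutator source. For the second-derivative bound~\eqrefb{eq:betaderiv2}, I would repeat the same program after differentiating the four-summand expression for $\dot\beta_{n,m}$ once more. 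This generates new contributions involving $\dddot H$, products $\dot H\cdot\ddot H$, and triple products of $\dot H$; each is again packaged as a reduced resolvent acting on a vector whose norm is controlled via the same resolution-of-unity identities, and the coefficients $44$, $12$, $1$ arise from tracking the combinatorial multiplicities introduced by the second differentiation.
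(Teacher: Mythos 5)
Your setup mirrors the first half of the paper's Appendix~A proof: the quotient rule, the removal of eigenvector derivatives via $\bra{\dot n}=\sum_k\beta_{n,k}\bra{k}$, and the repackaging of the $\dot\gap$ term as a commutator with the diagonal part of $\dot H$ are all correct, and the reduction of the lemma to a norm bound on $A(s)$ acting on $\eta=\sum\chi(\vec\nu)\ket{\nu_{q-1}(s)}$ is legitimate. The gap is in the quantitative step. Your linchpin $\|B\|_{\mathrm{op}}\le\|\dot H\|/\gap_{\min}$ is not delivered by the argument you give: the statement that $\mathrm{ad}_H$ has smallest nonzero eigenvalue $\gap_{\min}$ on its range is a spectral statement with respect to the Hilbert--Schmidt inner product, so inverting $[H,B]=\dot H-\mathrm{diag}(\dot H)$ that way only yields $\|B\|_{HS}\le\|\dot H\|_{HS}/\gap_{\min}$, and converting Hilbert--Schmidt to operator norm reintroduces exactly the dimension factor the lemma is designed to eliminate. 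In operator norm, $\mathrm{ad}_H^{-1}$ on the off-diagonal part is the Schur multiplier by $1/(E_n-E_m)$, whose multiplier norm is not $1/\gap_{\min}$ (for equispaced spectra it tends to $\pi/(2\gap_{\min})$); obtaining even $O(1)/\gap_{\min}$ requires a genuinely different tool (an integral representation of the regularized $1/u$ as the Fourier transform of a finite measure, i.e.\ a Bhatia--Davis--McIntosh--type commutator inequality), and that would not reproduce the constants $4$, $44$, $12$.

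A second, independent problem is the assembly step: each of your four pieces has the form $A_j\ket{m}=R_m^{-1}P_m^{\perp}v_{j,m}$ with an $m$-dependent reduced resolvent, so the estimates you list control only the columns, $\max_m\|A_j\ket{m}\|$. A uniform column bound does not bound $\|A_j\|_{\mathrm{op}}$; in general that passage costs another factor of $\sqrt{N}$, so the dimension dependence sneaks back in precisely where you claim to have removed it (the same objection applies to your sketched treatment of $\ddot\beta$, which moreover omits how the $\ddot E$ terms are controlled --- the paper needs the Ambainis--Regev bound $|\ddot\gap|\le2\|\ddot H\|+8\|\dot H\|^2/\gap_{\min}$ for this). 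The paper avoids both issues by never forming the norm of a Schur-multiplier-type operator: after inserting the resolutions of unity it keeps the specific vector $\sum\chi\ket{\nu_{q-1}(s)}$ in place and factors each term into complete, unit-norm basis-change sums $\sum_\nu\ketbra{\nu(s')}{\nu(s)}$ alternating with $\dot H$ or $\ddot H$, bounding the index-dependent gap denominators pointwise by $\gap_{\min}$; the stated constants come out of that bookkeeping. To repair your write-up you should either adopt that structure or supply a genuine Schur-multiplier/commutator-inversion bound and accept the correspondingly larger constants.
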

The function $\chi$ in these bounds is an arbitrary function, for example if we wish to find an upper bound for
$\left\|\sum_{\nu_2,\nu_1}\dot\beta_{\nu_{2},\nu_{1}}(s)\beta_{\nu_{1},\gst}(s)\ket{\nu_{2}(s')}\right\|$,
then we can use~\eqrefb{eq:betaderiv2} to find the upper bound by choosing $\chi(\nu_1,s)=\beta_{\nu_1,\gst}(s)$.
The proof of Lemma~\ref{lem:betaderiv} is given in Appendix~\ref{appendix:proofofbetaderiv}.

We will now use the first result in Lemma~\ref{lem:betaderiv} to prove Lemma~\ref{lem:jumpbound}.
\begin{proofof}{Lemma \ref{lem:jumpbound}}
Using integration by parts on \eqrefb{eq:jumpsec2} yields

\begin{equation}\label{eq:jumpsec6}
\Ffunc_{\nu_q}(x)=\left.\frac{\Ffunc_{\nu_{q-1}}(s)e^{-i\gfunc{s}{\nu_{q-1},\nu_{q}}T}\beta_{\nu_q,\nu_{q-1}}(s)}{-i\gfuncp{s}{\nu_{q-1},\nu_{q}}T}\right|_0^x- \int_0^x \frac{\partial}{\partial_s}\left(\frac{\beta_{\nu_q,\nu_{q-1}}(s)\Ffunc_{\nu_{q-1}}(s)}{-i\gfuncp{s}{\nu_{q-1},\nu_{q}}T}\right)e^{-i\gfunc{s}{\nu_{q-1},\nu_{q}}T} \mathrm{d}s,
\end{equation}
where $\gfunc{s}{\nu_{q-1},\nu_{q}}=\int_0^s\gap_{\nu_{q-1},\nu_{q}}(\xi)\mathrm{d}\xi$ and $\Ffunc_{\nu_q}(x)$ represents the time-integrals in the sum over all $q$-jump non-adiabatic paths.
Using~\eqrefb{eq:jumpsec2} and the Fundamental Theorem of Calculus, we can simplify~\eqrefb{eq:jumpsec6} by substituting
\begin{equation}
\partial_s \Ffunc_{\nu_{q-1}}(s)=e^{-i\gfunc{s}{\nu_{q-2},\nu_{q-1}}T}\beta_{\nu_{q-1},\nu_{q-2}}(s)\Ffunc_{\nu_{q-2}}(s).
\end{equation}
We also have that for any $p>0$, $\Ffunc_{p}(0)=0$.
We then simplify~\eqrefb{eq:jumpsec6} by using the recursion relations
to express every instance of $\Ffunc_{\nu_{q-1}}$ as an integral of $\Ffunc_{\nu_{q-2}}$, to find
\begin{align}
&\Ffunc_{\nu_q}(x)= \Biggr( \frac{e^{-i\gfunc{x}{\nu_{q-1},\nu_{q}}T}\beta_{\nu_q,\nu_{q-1}}(x)}{-i\gfuncp{x}{\nu_{q-1},\nu_{q}}T}\int_0^xe^{-i\gfunc{s_1}{\nu_{q-2},\nu_{q-1}}T}\beta_{\nu_{q-1},\nu_{q-2}}(s_1)\Ffunc_{\nu_{q-2}}(s_1)\mathrm{d}s_1
\nonumber\\
&-\int_0^x \frac{\partial}{\partial_{s_1}}\left(\frac{\beta_{\nu_{q},\nu_{q-1}}(s_1)}{-i\gfuncp{s_1}{\nu_{q-1},\nu_{q}}T}\right)e^{-i\gfunc{s_1}{\nu_{q-1},\nu_{q}}T} \int_0^{s_1}e^{-i\gfunc{s_2}{\nu_{q-2},\nu_{q-1}}T}\beta_{\nu_{q-1},\nu_{q-2}}(s_2)\Ffunc_{\nu_{q-2}}(s_2)\mathrm{d}^2s\nonumber\\
&-\int_0^x \left(\frac{\beta_{\nu_q,\nu_{q-1}}(s)\beta_{\nu_{q-1},\nu_{q-2}}(s)}{-i\gfuncp{s}{\nu_{q-1},\nu_{q}}T}\right)e^{-i(\gfunc{s}{\nu_{q-1},\nu_{q}}+\gfunc{s}{\nu_{q-2},\nu_{q-1}})T} \Ffunc_{\nu_{q-2}}(s)\mathrm{d}s\Biggr).\label{eq:intpartrecur}
\end{align}

We then express the sum over all values of $\nu_1,\ldots,\nu_q$ in~\eqrefb{eq:intpartrecur} as
\begin{equation}
\sum_{\nu_q,\nu_{q-1},\ldots}\Ffunc_{\nu_q}(x)\ket{\nu_q(x)}=g_q(x)-h_q(x),\label{eq:intpartrecur2}
\end{equation}
where for any $q>0$,
\begin{align}
h_q(x):=\sum_{\nu_q,\nu_{q-1},\ldots}\ket{\nu_q(x)}\int_0^x\int_0^{s_1}&\left(\diff{}{s_1}\frac{\beta_{\nu_q,\nu_{q-1}}(s_1)}{-i\gap_{\nu_{q-1},\nu_q}(s_1)T}\right)e^{-i(\kappa_{\nu_{q-1},\nu_q}(s_1)+\kappa_{\nu_{q-2},\nu_{q-1}}(s_2))T}\nn
&\qquad\times\beta_{\nu_{q-1},\nu_{q-2}}(s_2)\Ffunc_{\nu_{q-2}}(s_2)\mathrm{d}^2s,\label{eq:hfuncdef}
\end{align}
where $\mathrm{d}^2s=\mathrm{d}s_2\mathrm{d}s_1$ and
\begin{align}
g_q(x):=&\sum_{\nu_q,\nu_{q-1},\ldots}\ket{\nu_q(x)}\Biggr(\int_0^x\frac{e^{-i(\kappa_{\nu_{q-1},\nu_q}(x)+\kappa_{\nu_{q-2},\nu_{q-1}}(s_1))T}\beta_{\nu_{q},\nu_{q-1}}(x)\beta_{\nu_{q-1},\nu_{q-2}}(s_1)f_{\nu_{q-2}}(s_1)}{-i\gap_{\nu_{q-1},\nu_q}(x)T}\nonumber\\
&-\frac{e^{-i(\kappa_{\nu_{q-1},\nu_q}(s_1)+\kappa_{\nu_{q-2},\nu_{q-1}}(s_1))T}\beta_{\nu_{q},\nu_{q-1}}(s_1)\beta_{\nu_{q-1},\nu_{q-2}}(s_1)f_{\nu_{q-2}}(s_1)}{-i\gap_{\nu_{q-1},\nu_q}(x)T}~\mathrm{d}s_1\Biggr).
\end{align}

We now will proceed to bound $\|g_q(x)\|$.  To do so, we use the definition of $\beta$ in~\eqrefb{eq:betadef} to find following bound, which is valid for any $s',s'',s'''$ and $s''''$.
\begin{align}
&\Biggr\|\sum_{\nu_q,\nu_{q-1},\ldots}\ket{\nu_q(x)}\beta_{\nu_q,\nu_{q-1}}(s')\beta_{\nu_{q-1},\nu_{q-2}}(s'')f_{\nu_{q-2}}(s''')/\gap_{\nu_{q-1},\nu_{q}}(s'''')\Biggr\|\nonumber\\
&\qquad=\left\|\frac{\sum_{\nu_q,\nu_{q-1},\ldots}\ket{\nu_q(x)}\bra{\nu_q(s')}\dot H(s')\ket{\nu_{q-1}(s')}\bra{\nu_{q-1}(s'')}\dot{H}(s'')\ket{\nu_{q-2}(s'')}f_{\nu_{q-2}}(s''')}{\gap_{\nu_{q-1},\nu_q}(s')\gap_{\nu_{q-2},\nu_{q-1}}(s'')\gap_{\nu_{q-1},\nu_{q}}(s'''')}\right\|\nonumber\\
&\qquad\le\frac{\|\dot H\|^2}{\gap_{\min}^3}\left\|\sum_{\nu_q}\ket{\nu_q(x)}\bra{\nu_q(s')}\right\|\left\|\sum_{\nu_{q-1}}\ket{\nu_{q-1}(s')}\bra{\nu_{q-1}(s'')}\right\|\left\|\sum_{\nu_{q-2},\ldots}f_{\nu_{q-2}}(s''')\ket{\nu_{q-2}(s'')} \right\| \nonumber\\
&\qquad\le\frac{\|\dot H\|^2}{\gap_{\min}^3}\left\|\sum_{\nu_{q-2},\ldots}f_{\nu_{q-2}}(s''')\ket{\nu_{q-2}(s'')} \right\|,
\end{align}
where $\gap_{\min}$ is the minimum eigenvalue gap between any two non-degenerate eigenstates, for any $s\in[0,1]$.
We then have
\begin{align}
\|g_q(x)\|&\le \frac{2\|\dot H\|^2}{\gap_{\min}^3T}\int_0^x\left\|\sum_{\nu_{q-2},\nu_{q-3},\ldots} \Ffunc_{\nu_{q-2}}(s_1)\ket{\nu_{q-2}(s_1)} \right\|\mathrm{d}s_1.\nonumber\\
&\le\frac{2\|\dot H\|^2}{\gap_{\min}^3T}\int_0^x\left(\|g_{q-2}(s_1)\|+\|h_{q-2}(s_1)\|\right)\mathrm{d}s_1,\label{eq:gnorm}
\end{align}
where the last inequality follows from the triangle inequality.
Using similar reasoning and our expression for the derivative of $\beta$ in~\eqrefb{eq:betaderiv}, we find that
\begin{align}
\left\|h_q(x)\right\|&\le \left(\frac{\|\dot H\|\|\ddot H\|}{\gap_{\min}^3T}+\frac{6\|\dot H\|^3}{\gap_{\min}^4T}\right)\int_0^x\int_0^{s_1}\left\|\sum_{\nu_{q-2},\nu_{q-3},\ldots} \Ffunc_{\nu_{q-2}}(s_2)\ket{\nu_{q-2}(s_2)} \right\|\mathrm{d}^2s.\nonumber\\
&\le \left(\frac{\|\dot H\|\|\ddot H\|}{\gap_{\min}^3T}+\frac{6\|\dot H\|^3}{\gap_{\min}^4T}\right)\int_0^x\int_0^{s_1}\left(\|g_{q-2}(s_1)\|+\|h_{q-2}(s_1)\|\right) \mathrm{d}^2s\label{eq:hnorm}
\end{align}

We eliminate $\|g_{q-2}\|$ and $\|h_{q-2}\|$ in~\eqrefb{eq:gnorm} and~\eqrefb{eq:hnorm} iteratively by using~\eqrefb{eq:gnorm} and~\eqrefb{eq:hnorm} to provide upper bounds for $\|g_{q-2}\|$ and $\|h_{q-2}\|$ in terms of $\|g_{q-4}\|$ and $\|h_{q-4}\|$.  This process is then repeated $\lfloor q/2 \rfloor$ times to eliminate $\|h_i\|$ and $\|g_i\|$ from our bounds for any $i$.  In the case where $q$ is even, no integrals will remain after performing
this substitution $q/2$ times, whereas if $q$ is odd, the remaining terms will have one integral left that has not been approximated using integration by parts.

Let us now consider the case where $q=2m$.  We expand $f_{\nu_{2m}}$, which is the time-integral part of $C_{2m}$, into a sum of $2^m$ terms
by applying the recursion relations in~\eqrefb{eq:hnorm} and~\eqrefb{eq:gnorm}.  We then group these terms by the number
of integrals present in a given term.  It follows from combinatorics that, for any $\ell \in \{0,1,\ldots,m\}$,
there are $\binom{m}{\ell}$ ways to obtain a term with $2m-\ell$ integrals and hence

\begin{align}
&\|h_{2m}(x)\|+\|g_{2m}(x)\|\nn
&\qquad\le\sum_{\ell=0}^{m}\binom{m}{\ell}\int_0^1\cdots\int_0^{s_{2m-\ell}}\left(\frac{2\|\dot H\|^2}{\gap_{\min}^3T} \right)^{m-\ell}\left(\frac{\|\dot H\|\|\ddot H\|}{\gap_{\min}^3T}+\frac{6\|\dot H\|^3}{\gap_{\min}^4T} \right)^{\ell}\mathrm{d}^{2m-\ell}s
\nonumber\\
&\qquad\le\sum_{\ell=0}^{m}\binom{m}{\ell}\frac{1}{(2m-\ell)!}\left(\frac{2\|\dot H\|^2}{\gap_{\min}^3T} \right)^{m-\ell}\left(\frac{\|\dot H\|\|\ddot H\|}{\gap_{\min}^3T}+\frac{6\|\dot H\|^3}{\gap_{\min}^4T} \right)^{\ell}.\nonumber\\
&\qquad\le\frac{1}{m!(\gap_{\min}T)^m}\sum_{\ell=0}^{m}\binom{m}{\ell}\left(\frac{2\|\dot H\|^2}{\gap_{\min}^2} \right)^{m-\ell}\left(\frac{\|\dot H\|\|\ddot H\|}{\gap_{\min}^2}+\frac{6\|\dot H\|^3}{\gap_{\min}^3} \right)^{\ell}.\nonumber\\
&\qquad=\frac{\Gamma^m}{m!(\gap_{\min}T)^m},\label{eq:lasteven}
\end{align}
where $1/(2m-\ell)!$ is the volume of integration and the last line of~\eqrefb{eq:lasteven} follows from the binomial theorem for
\begin{equation}
\Gamma:=\left(\frac{2\|\dot H\|^2+\|\dot H\|\|\ddot H\|}{\gap_{\min}^2}+\frac{6\|\dot H\|^3}{\gap_{\min}^3} \right).\label{eq:gamdef2}
\end{equation}
By summing~\eqrefb{eq:lasteven} over all $m\ge 2$, we find that
\begin{equation}
\sum_{m=2}^{\infty} \|C_{2m}\|\le e^{\Gamma/(\gap_{\min}T)}-\frac{\Gamma}{\gap_{\min}T}-1.\label{eq:totaleven}
\end{equation}

Our next step is to find $\sum_{m=1}^\infty \|C_{2m+1}\|$.  To do so, we use the fact that a path that contains $2m+1$ jumps
is the concatenation of a path that contains $2m$ jumps and a path that contains one jump.
An upper bound on the contribution of non-adiabatic paths with $2m+1$ jumps is found by taking the norm of
the result for the $2m$ jumps in~\eqrefb{eq:lasteven} and multiplying it by an upper bound on the remaining one-jump
path.  Hence an upper bound on the contribution of these paths is
\begin{align}
\|C_{2m+1}\|&\le\frac{\Gamma^{m}}{m!(\gap_{\min}T)^{m}}\max_{s\in[0,1]}\left\|\sum_{\nu_1}\int_0^s\beta_{\nu_1,\gst}(s')e^{-i\int_0^{s'}\gap_{\gst,\nu_1}(\xi)\mathrm{d}\xi T}\ket{\nu_1(s)}\mathrm{d}s'\right\|,\nonumber\\
&\le\frac{\Gamma^{m}}{m!(\gap_{\min}T)^{m}}\max_{s}\Biggr\|\sum_{\nu_1}\ket{\nu_1(s)}\Biggr(\frac{\beta_{\nu_1,\gst}(s')e^{-i\kappa_{\gst,\nu_1}(s')T}}{-i\gap_{\nu_1,\gst}(s')T}\Biggr|_{0}^s\nn
&-\int_0^s \diff{}{s'}\Biggr(\frac{\beta_{\nu_1,\gst}(s')}{-i\gap_{\nu_1,\gst}(s')T} \Biggr)e^{-i\kappa_{\gst,\nu_1}(s')T} \mathrm{d}s'\Biggr)\Biggr\|,\nonumber\\
&\le\frac{\Gamma^{m}}{m!(\gap_{\min}T)^{m}}\frac{\Gamma}{\|\dot H\| T}.\label{eq:oddjump}
\end{align}
By summing this result over $m=1,2,\ldots$ we find that
\begin{equation}
\sum_{m=1}^\infty \|C_{2_m+1}\|\le\frac{\Gamma}{\|\dot H\|T}\left(e^{\Gamma/(\gap_{\min}T)}-1\right).\label{eq:totalodd}
\end{equation}

Adding~\eqrefb{eq:totaleven} to~\eqrefb{eq:totalodd} yields
\begin{equation}
\sum_{j=3}^\infty \|C_j\|\le\left(1+\frac{\Gamma}{\|\dot H\|T}\right)\left(e^{\Gamma/(\gap_{\min}T)}-1\right)-\frac{\Gamma}{\gap_{\min}T}.\label{eq:gambound}
\end{equation}

The proof of the Theorem then follows from the above results.
By definition, the contribution of the one-jump and two-jump paths to the error in the adiabatic approximation are  $\|\sumOneJumpxpr\|$ and $\|\sumTwoJumpxpr\|$ respectively.  Equation~\ref{eq:lemupperbound} then follows from the triangle inequality.
\end{proofof}

Now all that remains in our proof of the adiabatic approximation is to bound the one- and two-jump contributions
to the error in the adiabatic approximation.  The following lemma contains our bounds for $\|C_1\|$ and $\|C_2\|$.

\begin{lemma}\label{lem:intpartsresult}
Let $H:[0,1]\mapsto \mathbb{C}^{N\times N}$ be differentiable three times then,
\begin{align}
\left\|\sumOneJumpxpr-\left.\sum_{\nu}\frac{e^{-i\phi_\nu}\ket{\nu(1)}\beta_{\nu,\gst}(s_1)e^{-i\gfunc{s_1}{\gst,\nu}T}}{-i\gap_{\gst,\nu}(s_1)T}\right|_{0}^{1}\right\|\le \mathbf{R}_0\label{eq:\Ffunc1bound},
\end{align}
where $\phi_\nu=\int_0^1 E_\nu(\xi)\mathrm{d}\xi T$,
\begin{equation}\mathbf{R}_0:=\frac{2\|\ddot H\|+\|\dddot H\|}{\gap_{\min}^3T^2}+\frac{20\|\dot H\|\|\ddot H\|+12\|\dot H\|^2}{\gap_{\min}^4T^2}+\frac{88\|\dot H\|^3}{\gap_{\min}^5T^2},\end{equation}
and
\begin{align}
\|\sumTwoJumpxpr\|\leq & \frac{\|\ddot H\|^2+4\|\dot H\|^2+5\|\dot H\|\|\ddot H\|}{\gap_{\min}^4T^2}+\frac{12\|\ddot H\|\|\dot H\|^2+30\|\dot H\|^3}{\gap_{\min}^5T^2}+\frac{36\|\dot H\|^4}{\gap_{\min}^6T^2},\label{eq:\Ffunc2lbound}
\end{align}
where $\|\dot H\|=\max_{s\in I} \|\dot H(s)\|$, $\|\ddot H\|=\max_{s\in I} \|\ddot H(s)\|$, $\|\dddot H\|=\max_{s\in I} \|\dddot H(s)\|$ and $\gap_{\min}$
is the minimum energy gap between any two non-degenerate energy levels.
\end{lemma}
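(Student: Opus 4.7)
The plan is to establish both bounds by integrating by parts on the oscillatory integrals in $C_1$ and $C_2$, using Lemma~\ref{lem:betaderiv} to keep the resulting constants independent of the Hilbert-space dimension. For $C_1$ one integration by parts yields the stated leading-order expression, so a second integration by parts suffices to push the remainder to order $T^{-2}$. For $C_2$, two successive integrations by parts reduce $\|C_2\|$ to order $T^{-2}$ directly, since $C_2$ has no surviving leading term.

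For $C_1$, I would rewrite it as $\sum_{\nu\ne \gst} e^{-i\phi_\nu}\ket{\nu(1)}\int_0^1 \beta_{\nu,\gst}(s)e^{-i\gfunc{s}{\gst,\nu}T}\,ds$ and integrate by parts once, exactly as in Subsection~\ref{subsec:one-jumpexamp}; the boundary term matches the leading expression subtracted on the left-hand side of the lemma's first inequality. Applying integration by parts a second time to the remainder produces $O(T^{-2})$ boundary terms plus an integral whose integrand contains $\ddot\beta$, $\dot\beta\cdot\dot\gap$, $\beta\cdot\ddot\gap$ and $\beta\cdot\dot\gap^2$, each divided by appropriate powers of $\gap_{\gst,\nu}$. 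To bound the $\nu$-sum, I would keep everything packaged as vectors $\sum_\nu\ket{\nu(1)}\bra{\nu(s)}\cdots$ and dispatch each scalar factor by a resolution of identity (using that $\sum_\nu \ket{\nu(1)}\bra{\nu(s)}$ has operator norm one), invoking Lemma~\ref{lem:betaderiv} to handle the $\dot\beta$ and $\ddot\beta$ pieces. Collecting the resulting $\|\dot H\|$, $\|\ddot H\|$, $\|\dddot H\|$ and $\gap_{\min}^{-1}$ factors then reproduces $\mathbf{R}_0$.

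For $C_2$, I would first integrate by parts with respect to $s_2$. Since the inner $s_1$-integral vanishes at $s_2=0$, only the boundary term at $s_2=1$ survives, and it multiplies the remaining integral $\int_0^1\beta_{\nu_1,\gst}(s_1)e^{-i\gfunc{s_1}{\gst,\nu_1}T}\,ds_1$, on which another integration by parts delivers an $O(T^{-2})$ contribution. The remainder splits into (i) a crossed piece in which the $s_2$-derivative falls on $\beta_{\nu_2,\nu_1}(s_2)/\gap_{\nu_1,\nu_2}(s_2)$, leaving the $s_1$-integral decoupled and amenable to one more integration by parts in $s_1$; and (ii) a diagonal piece in which the derivative falls on the inner integral and thereby sets $s_1=s_2$, fusing the two oscillatory phases. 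The fused phase has derivative $\gap_{\nu_1,\nu_2}(s_2)+\gap_{\gst,\nu_1}(s_2)=\gap_{\gst,\nu_2}(s_2)$, which is at least $\gap_{\min}$ in magnitude by hypothesis; a further integration by parts in $s_2$ therefore yields the required $T^{-2}$ suppression, and summing the three contributions reproduces the stated bound on $\|C_2\|$.

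The main obstacle is not the calculus, which is routine, but the bookkeeping required to obtain explicit constants while carrying no factor of $N$. A term-by-term triangle inequality after differentiating \eqrefb{eq:betadef} would introduce $N$ through the sum over intermediate eigenstates; the key discipline is therefore to keep the $\nu$-sums packaged as vectors, to invoke Lemma~\ref{lem:betaderiv} whenever a derivative of $\beta$ appears, and to use the resolution of identity everywhere else. Once this is in place, accounting for how each boundary and remainder term scales in $\|\dot H\|$, $\|\ddot H\|$, $\|\dddot H\|$ and $\gap_{\min}^{-1}$ produces the explicit coefficients displayed in $\mathbf{R}_0$ and in the upper bound on $\|C_2\|$.
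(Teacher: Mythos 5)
Your proposal is correct and follows essentially the same route as the paper's own proof: one integration by parts on $C_1$ to isolate the boundary term and a second to push the remainder to $O(T^{-2})$, and for $C_2$ an outer integration by parts whose fundamental-theorem-of-calculus term fuses the phases into one governed by $\gap_{\gst,\nu_2}$ (nonzero by the non-degeneracy of $\ket{\gst}$), followed by a second integration by parts on each surviving piece. Your emphasis on keeping the $\nu$-sums packaged as vectors and invoking Lemma~\ref{lem:betaderiv} together with the bounds on $\dot\gap$ and $\ddot\gap$ to avoid factors of $N$ is exactly the bookkeeping discipline the paper uses to arrive at $\mathbf{R}_0$ and the $\|C_2\|$ bound.
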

The proof of Lemma~\ref{lem:intpartsresult} is similar to our proof of Lemma~\ref{lem:jumpbound}.  The main difference between our proofs is that~\eqrefb{eq:intpartrecur} is
not used, and instead we apply integration by parts twice to guarantee that our approximations are correct to $O(1/T^2)$.  We then use the triangle inequality to find upper bounds for the $O(1/T^2)$ terms and then use the triangle inequality to prove our error bounds.  The details of this
proof are given in Appendix~\ref{appendix:proofintparts}.

The proof of Theorem~\ref{thm:errorbounds} follows from the results of Lemmas~\ref{lem:jumpbound} and~\ref{lem:intpartsresult}.
\begin{proofof}{Theorem~\ref{thm:errorbounds}}
The proof immediately follows by substituting the bounds for $\|C_1\|$ and $\|C_2\|$ in Lemma~\ref{lem:intpartsresult} into the result
of Lemma~\ref{lem:jumpbound}.
\end{proofof}
The proof of our zeroth-order adiabatic theorem, given as Theorem~\ref{thm:zeroorder}, follows directly from this result.  Similarly to the proof of Theorem~\ref{thm:errorbounds}, the proof of Theorem~\ref{thm:jumpbound} follows from combining Lemma~\ref{lem:jumpbound} and Lemma~\ref{lem:intpartsresult} and then showing that the terms that are proportional to $T^{-2}$ are $O(\Delta_1/\gap_{\min}T^2)$, where the timescale $\Delta_1$ is given in Definition~\ref{def:timescale}.

\begin{proofof}{Theorem~\ref{thm:jumpbound}}
Our asymptotic bounds are taken in the limit of small $\Delta_1/[\gap_{\min}T^2]$ and large $\Delta_1$.  From the assumption that $\Delta_1$ is asymptotically greater than any constant, we find by substitution that each of the five terms that are in~\eqrefb{eq:\Ffunc1bound} are proportional to $T^{-2}$ and their sum is $O(\Delta_1/[\gap_{\min}T^2])$.  This implies that
\begin{equation}
C_1=\left.\sum_{\nu}\frac{e^{-i\int_0^1 E_\nu(s)\mathrm{d}s T}\ket{\nu(1)}\beta_{\nu,\gst}(s_1)e^{-i\gfunc{s_1}{\gst,\nu}T}}{-i\gap_{\gst,\nu}(s_1)T}\right|_{0}^{1}+O(\Delta_1/[\gap_{\min}T^2]).
\end{equation}
It follows from substitution into~\eqrefb{eq:\Ffunc2lbound} that $C_2\in O(\Delta_1/[\gap_{\min}T^2])$.

Similarly, we have that
\begin{equation}
\Gamma^2\in O\left(\frac{\|\dot H\|^2\|\ddot H\|^2}{\gap_{\min}^4}+\frac{\|\dot H\|^4\|\ddot H\|}{\gap_{\min}^5}+\frac{\|\dot H\|^6}{\gap_{\min}^6} \right).\label{eq:gamsquared}
\end{equation}
The second term in the above equation is asymptotically greater than or equal to the first term if $\|\ddot H\|\in O(\|\dot H\|^2/\gap_{\min})$.  However, this condition implies that the third term in~\eqrefb{eq:gamsquared} is asymptotically dominant.  Therefore,
\begin{equation}
\Gamma^2\in O\left(\frac{\|\dot H\|^2\|\ddot H\|^2}{\gap_{\min}^4}+\frac{\|\dot H\|^6}{\gap_{\min}^6} \right).\label{eq:gamsquared2}
\end{equation}
The first term in the above equation is dominant only if $\|\dot H\|^2/\gap_{\min}\in O(\|\ddot H\|)$.  It follows from~\eqrefb{eq:gamsquared2} that
\begin{equation}
\Gamma^2\in O\left(\frac{\|\ddot H\|^3}{\gap_{\min}^3}+\frac{\|\dot H\|^6}{\gap_{\min}^6} \right)\in O(\gap_{\min}\Delta_1).\label{eq:gamsquared3}
\end{equation}
From our assumption that $\Delta_1/(\gap_{\min} T^2)\in o(1)$ we find,
\begin{equation}
\frac{\Gamma^2}{\gap_{\min}^2T^2}\in o(1),
\end{equation}
which implies that
\begin{equation}
\frac{\Gamma}{\gap_{\min}T}\in o(1).\label{eq:gamsquared4}
\end{equation}
This result is important because it allows us to replace $\exp(\Gamma/[\gap_{\min}T])$ with a truncated Taylor series, without introducing any error terms that are exponentially large.
Following this strategy, we find from using Taylor's theorem and~\eqrefb{eq:gamsquared4} that the result in~\eqrefb{eq:gambound} is $O(\Delta_1/[\gap_{\min}T^2])$, proving that the contribution of $\|C_3\|,\|C_4\|,\ldots$ is $O(\Delta_1/[\gap_{\min}T^2])$, which proves the theorem.
\end{proofof}
We then can state the proof of our zeroth-order adiabatic theorem as follows.
\begin{proofof}{Theorem~\ref{thm:zeroorder}}
The proof directly follows by applying the triangle inequality to the result of Theorem~\ref{thm:jumpbound} and maximizing each term that is present.
\end{proofof}

Our upper bounds can be further tightened if we assume that the Hamiltonian has only two distinct energy levels for all $s$.
This is because, in this case, all paths that contain an even number of jumps and start in the state $\ket{\gst(0)}$, must end in the state $\ket{\gst(1)}$ because
of our choice of eigenvectors in~\eqrefb{eq:betadef}.  Therefore we neglect all paths that contain an even number of jumps.
This result is stated in the following corollary.
\begin{corollary}\label{cor:jumpbound}
Let $H(s)$ be a twice differentiable Hamiltonian with two distinct energy levels $E_\gst(s)$ and $E_1(s)$, where
the eigenvectors corresponding to $E_1(s)$ are $(N-1)$-fold degenerate.  Given these assumptions, the error in the adiabatic approximation is bounded above and below by,
\begin{align}
\left\|(\openone-P_{\gst}(1))U(1,0)\ket{\gst(0)}-\sumOneJumpxpr\right\|\leq \left(\frac{\Gamma}{\|\dot H\|T}\right)(e^{\Gamma/(\gap_{\min}T)}-1),\label{eq:lemupperbound2}
\end{align}
\end{corollary}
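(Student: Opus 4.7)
The plan is to exploit the two-level structure to force all even-jump contributions to vanish, reducing the problem to bounding a sum of odd-jump contributions that was already handled in the proof of Lemma~\ref{lem:jumpbound}.

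First I would observe that since all excited eigenvectors share the eigenvalue $E_1(s)$, the definition in~\eqrefb{eq:betadef} forces $\beta_{\nu,\mu}(s)=0$ whenever $\nu$ and $\mu$ both label states in the excited subspace. Combined with the path constraint $\nu_p\ne \nu_{p+1}$ from Section~\ref{sec:paths} and the initial condition $\nu_0=\gst$, this means that along any path contributing nontrivially to Theorem~\ref{thm:pathint} the labels must strictly alternate between $\gst$ and the excited level. Consequently a $2m$-jump path terminates in $\nu_{2m}=\gst$, whose corresponding summand in~\eqrefb{eq:transmat8} is annihilated by $(\openone-P_{\gst}(1))$. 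Thus $C_{2m}=0$ for every $m\ge 1$, and the error collapses to $\sum_{m=0}^{\infty} C_{2m+1}$.

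Next I would subtract $C_1$ from the error and apply the triangle inequality, leaving $\|(\openone-P_{\gst}(1))U(1,0)\ket{\gst(0)}-C_1\|\le \sum_{m=1}^{\infty}\|C_{2m+1}\|$. To bound this sum I would invoke the odd-jump estimate~\eqrefb{eq:oddjump} derived inside the proof of Lemma~\ref{lem:jumpbound}, which gives $\|C_{2m+1}\|\le \Gamma^{m+1}/\bigl(m!(\gap_{\min}T)^m\|\dot H\|T\bigr)$, and then sum the resulting geometric series as in~\eqrefb{eq:totalodd}. This yields exactly $(\Gamma/(\|\dot H\|T))(e^{\Gamma/(\gap_{\min}T)}-1)$, matching the claimed right-hand side.

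The only point requiring care, and the main obstacle I anticipate, is to confirm that the odd-jump estimate of Lemma~\ref{lem:jumpbound} transfers verbatim to the two-level setting. It does: that estimate was produced by first controlling the auxiliary quantities $\|h_{2m}\|+\|g_{2m}\|$ and then multiplying by an upper bound on a single remaining one-jump integral, a derivation that is insensitive to whether the corresponding even-jump error terms themselves cancel in the physical observable. The effect of specializing to two levels is therefore only to strengthen Theorem~\ref{thm:errorbounds} by discarding the $\|C_2\|$ contribution and the subtracted $-\Gamma/(\gap_{\min}T)$ term, which is precisely the improvement stated in the corollary.
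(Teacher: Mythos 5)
Your proposal is correct and follows essentially the same route as the paper's proof: the choice of eigenbasis in~\eqrefb{eq:betadef} kills transitions within the degenerate excited subspace, so every even-jump path is adiabatic and drops out, and the remaining odd-jump contributions are bounded by~\eqrefb{eq:totalodd} together with the triangle inequality. Your extra check that the odd-jump estimate of Lemma~\ref{lem:jumpbound} applies unchanged in the two-level setting is a sound clarification of the same argument, not a different one.
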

\begin{proof}
The proof of this corollary follows from our proof of
the previous theorem.  Although $E_1$ is $(N-1)$-fold degenerate, this system is an effective two-level system because our choice of eigenstates
prevents any transition between two states with the same energy eigenvalue.  Given this choice of eigenbasis, any path that has an even number of jumps is an adiabatic path, which are excluded from the sum.

Equation~\ref{eq:totalodd} then gives an upper bound on contribution to the error due to paths with more than one-jump because all paths with an even number of jumps do not contribute to the error in this case.  Equation~\ref{eq:lemupperbound2} follows from the definition of $\sumOneJumpxpr$ and the triangle
inequality.
\end{proof}


\section{Search Hamiltonian}\label{sec:searchham}
In this section we apply our error bounds in Theorem~\ref{thm:errorbounds} to the case of a Search Hamiltonian, and compare our results
to the JRS bound~\cite{ruskai:adiabatic}.  In addition, we discuss a new method for choosing the integration
time steps geometrically.

The Search Hamiltonian is a Hamiltonian whose ground state at $s=0$ is an initial guess for the solution to a search problem
 and the ground state at $s=1$ is chosen to be the state that is sought by the algorithm~\cite{farhi:adiabatic,roland:localad}.
If the Hamiltonian evolution obeys the adiabatic approximation then the initial guessed
state will be transformed into the final solution state, thereby solving the search problem.

We denote the initial state $\ket{\psi_0}$ and the marked state $\ket{\psi_1}$, and
the goal of the algorithm is to choose a Hamiltonian that generates an evolution that transforms the initial
guessed state into the marked state in the adiabatic limit.
We evolve our system according to a time-dependent Hamiltonian that is a linear interpolation between these two:
$$H(s)=(1-s)H_0+sH_1=\openone-(1-s)\ketbra{\psi_0}{\psi_0}-s\ketbra{\psi_1}{\psi_1}.$$
Although the search Hamiltonian has $N$ eigenstates, only $\ket{1(s')}$ has non-zero overlap with the ground
state $\ket{\gst(s)}$ for any $s$ and $s'$.  Therefore the search Hamiltonian generates an effective two-dimensional evolution.
The eigenvalue gap between for the two relevant states, if $\ket{\psi_0}=\sum_{x=0}^{N-1} \ket{x}/\sqrt{N}$
and $\ket{\psi_1}$ is a state in the computational basis, is~\cite{roland:localad}
\begin{equation}
\gap_{1,\gst}(s)=\sqrt{1-4\left(1-\frac{1}{N} \right)s(1-s)}.\label{eq:searchgap}
\end{equation}

If we define $\theta=\arccos(|\braket{\psi_0}{\psi_1}|)$ then the instantaneous eigenvectors of the Hamiltonian are
\begin{equation}
\ket{\gst(s)}=\frac{\sin(\theta-\varphi(s))}{\sin\theta}\ket{\psi_0}+\frac{\sin\varphi(s)}{\sin\theta}\ket{\psi_1},
\end{equation}
and
\begin{equation}
\ket{1(s)}=\frac{\cos(\theta-\varphi(s))}{\sin\theta}\ket{\psi_0}
-\frac{\cos\varphi(s)}{\sin\theta}\ket{\psi_1},
\end{equation}
where
\begin{equation}
\varphi(s)=\frac{1}{2}\arctan\left(\frac{s\sin(2\theta)}{(1-s+s\cos(2\theta))}\right).\label{eq:phis}
\end{equation}
These expressions for the eigenvectors clearly show that the eigenvectors undergo rotation in the plane formed
by the span of $\ket{\psi_0}$ and $\ket{\psi_1}$, where the rotation angle is given by $\varphi(s)$.  Similarly,
the rotation speed is given by $\dot\varphi(s)$.
This implies that the eigenstates rotate more slowly near $s=1/2$ than near $s=0$ or $s=1$, it makes sense to use an integration scheme
that takes constant sized steps in $\varphi$ rather than constant sized steps in $s$.  This is because constant
sized steps in $s$ will lead to overly conservative time steps being taken near $s=0$ or $s=1$.

\begin{figure}[t]
\centering
\includegraphics[scale=1]{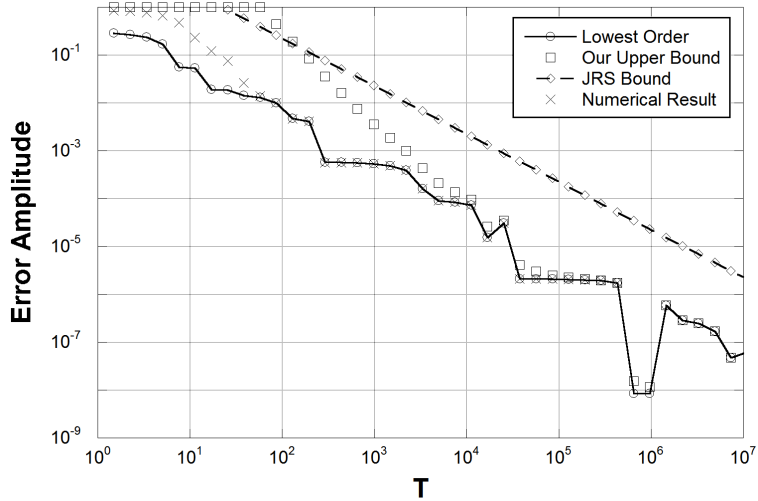}
\caption[Error in the adiabatic approximation for the search Hamiltonian.]{This figure shows the excitation amplitude for a search Hamiltonian with $4$ potential
marked states with $\ket{\psi_0}=\sum_{v=0}^3\ket{v}/\sqrt{4}$.  The jaggedness of the figure appears because of rapid oscillations that occur in the limit of large $T$. \label{fig:adgraph}
}
\end{figure}


%

We find for the case where $\theta=\arccos(1/\sqrt{N})$ that
\begin{equation}
\varphi(s)=\frac{\left[\arctan(\sqrt{N-1}(2s-1))+\arctan(\sqrt{N-1})\right]}{2}.
\end{equation}
Interestingly, this parametrization has been used in the literature to interpolate adiabatic evolution in order to minimize the approximation error~\cite{roland:localad,vandam:adiabaticpower}.  To clarify, we use this interpolation for the timesteps taken in our numerical method, but use linear interpolation for the Hamiltonian.

We use this result to test our upper and lower bounds for the error in the adiabatic approximation by computing the
time-evolution numerically using a modified version of~\eqrefb{eq:productsum} that uses evenly spaced steps in $\varphi$ rather than
$s$.  The results were found by adaptively choosing $L$ such that doubling its value resulted in less than a one percent
change in the computed error in the adiabatic approximation.  To test the results, we also compared the results of the simulation
for small values of $T$ to those found by solving the Schr\"odinger equation using an adaptive quadrature Runge-Kutta integrator using double precision data and found excellent agreement.
However, we found that the Runge-Kutta integrator was unable to provide sufficient accuracy for simulations with large $T$ and
hence we used~\eqrefb{eq:productsum} in its place.

In Fig.~\ref{fig:adgraph} we use these expressions for the eigenvalue gap as well as the eigenvectors to evaluate our upper
bound in Corollary~\ref{cor:jumpbound}, and compare them against the JRS bound~\cite{ruskai:adiabatic} as well
as our numerical data and our lowest order approximation to the error as given by Theorem~\ref{thm:jumpbound}.
In all cases, our upper bound on the error is superior to the JRS bound, in the limit of large $T$.  We also note that the JRS bound
does not converge to the observed error in the limit of large $T$, whereas ours does.  As expected from the fact that our bounds are higher-order than the JRS bound, our bounds are inferior for small values of $T$.

Fig.~\ref{fig:adiabaticlb} demonstrates that our upper bound and lower bounds approach the result of Corollary~\ref{cor:jumpbound} for large $T$.  We
do not plot our numerical data here because for $T>100$ the numerical results are graphically indistinguishable from the result of Corollary~\ref{cor:jumpbound}
and the computation of the error in the approximation becomes prohibitively expensive for $T>10^5$.

We cannot expect our bounds to be asymptotically tight for the search Hamiltonian because the leading order term, as predicted by Theorem~\ref{thm:jumpbound} and Corollary~\ref{cor:jumpbound}, vanishes if
\begin{equation}
T=\frac{2n\pi}{\int_0^1 \gap_{1,\gst}(s)\mathrm{d}s},\label{eq:generalcancelT}
\end{equation}
for any integer $n$.
There are three steps in verifying this claim.  First, we use~\eqrefb{eq:searchgap} to see that $\gap_{\gst,1}(0)=\gap_{\gst,1}(1)$ for the search Hamiltonian.  Second, we note that $\braket{\dot G(1)}{1(1)}=\braket{\dot G(0)}{1(0)}$ for the search Hamiltonian.  Third and last, we substitute these results into~\eqrefb{eq:thmupperbound} to find that the leading order term in both the upper and lower bounds vanish if $T$ is chosen as in~\eqrefb{eq:generalcancelT}.  As we only retained the exact expression for the $O(1/T)$ terms in the error in the adiabatic approximation, the higher order terms will not generically be correct.  Therefore, we cannot expect our error bounds to be asymptotically tight because of the existence of such times.

On the other hand, if $T$ is \emph{not} chosen as in~\eqrefb{eq:generalcancelT} then we can see that the bounds inferred from~\eqrefb{eq:thmupperbound} approach each other as $T$ increases.  This follows because the norms of the derivatives of $H(s)$ are independent of $T$ for the search Hamiltonian, and therefore the difference between the two bounds is $O(1/T^2)$.  As the upper and lower bounds on the error approach each other, we know from the squeeze limit theorem that~\eqrefb{eq:thmupperbound} and~\eqrefb{eq:thmupperbound} approach the correct expression for the error at all points other than those in~\eqrefb{eq:generalcancelT}.
Thus, our error bounds in Theorem~\ref{thm:errorbounds} are not asymptotically tight in the limit of large $T$ for the Search Hamiltonian; however, they are asymptotically tight
for any Hamiltonian that is not explicitly a function of $T$, and for which the leading order contribution does not vanish at some $T \ge 0$.

\begin{figure}[t!]
\centering
\includegraphics[scale=1]{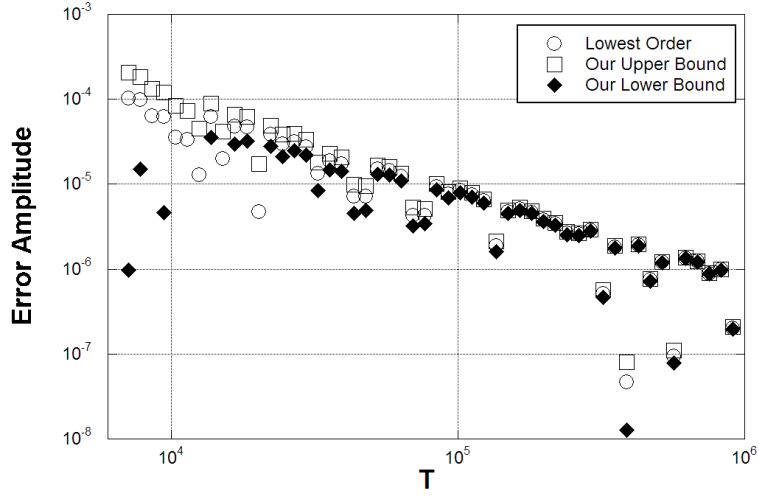}
\caption[Convergence of bounds for error in adiabatic approximation.]{This figure shows that our upper and lower bounds given by Theorem~\ref{thm:errorbounds} for the error in the adiabatic approximation converge for
a search Hamiltonian with $4$ potential marked states, as $T\rightarrow \infty$.}
\label{fig:adiabaticlb}
\end{figure}

\section{Marzlin--Sanders Counterexample}\label{sec:marzlinsanders}
The Marzlin--Sanders counterexample~\cite{marzlin:counter} is often placed at the center of the ongoing debate regarding the conditions that are needed to guarantee the validity of the adiabatic approximation~\cite{mackenzie:adiabaticvalidity,lidar:adiabaticvalidity}.  We examine two issues surrounding the counterexample, in this section. First, we look at the Marzlin--Sanders counterexample to benchmark our criteria for the validity of the adiabatic approximation.  We do this
to show that our sufficiency conditions prohibit the use of the adiabatic approximation for their Hamiltonian.  Second, we examine the possibility that Hamiltonians that are related to the Marzlin--Sanders counterexample may actually behave adiabatically.  In doing so, we provide proofs of Corollaries~\ref{cor:marzsand2} and~\ref{cor:marzsand}.

The counterexample Hamiltonian is of the form~\cite{marzlin:counter},
\begin{equation}
H_{MS}(s)=\left[\omega_0\left(\begin{array}{c}\cos(2\pi s)\\\sin(2\pi s)\\ 0 \end{array}\right)+\frac{2\pi\sin(\omega_0sT)}{T}\left(\begin{array}{c}-\sin(2\pi s)\cos(\omega_0 sT)\\\cos(2\pi s)\cos(\omega_0 sT)\\ \sin(\omega_0 sT)\end{array} \right)\right]\cdot [\sigma_x,\sigma_y,\sigma_z],
\end{equation}
where $\sigma_x$, $\sigma_y$ and $\sigma_z$ are the Pauli operators.
The eigenvalue gap for their Hamiltonian is
\begin{equation}
\gap(s)=\sqrt{\omega_0^2+\left(2\pi\sin(\omega_0 sT)/T\right)^2}.
\end{equation}
The three most important quantities for determining the error using~\eqrefb{eq:lemupperbound} are:
\begin{align}
\|{\partial_s H_{MS}(s)}\|&\in O(1)\label{eq:onederivMS},\\
\|{\partial_s^2 H_{MS}(s)}\|&\in O(T)\label{eq:twoderivMS},\\
\gap_{\min}&\in O(1).
\end{align}
By substituting these scalings into our upper bound in Theorem~\ref{thm:errorbounds}, we expect the error in the
adiabatic approximation to be $\Theta(1)$ and not $o(1)$.  This means that our results suggest, but do not imply, that the Marzlin--Sanders counterexample
does not obey the adiabatic approximation.
On the other hand, if the second derivative of the Marzlin--Sanders Hamiltonian were $o(T)$ instead of $\Omega(T)$, then the adiabatic
approximation would hold for a sufficiently large value of $T$.  We noted this fact in Corollary~\ref{cor:marzsand}, which we prove below.
\begin{proofof}{Corollary~\ref{cor:marzsand}}
Since $\|\dot H\|\in O(1)$ and $\|\ddot H\|\in o(T)$ it follows that $\Gamma\in o(T)$ and $\Gamma/T\in o(1)$.
 We have that \begin{equation}\left(1+\frac{\Gamma}{\|\dot H\|T}\right)(e^{\Gamma/(\gap_{\min}T)}-1)-\frac{\Gamma}{\gap_{\min}T}\in o(1).\end{equation} Therefore Lemma~\ref{lem:jumpbound} implies that
\begin{equation}
\|(\openone-P_{\gst}(1))U(1,0)\ket{\gst}\|\le \sumOneJumpxpr+\sumTwoJumpxpr+ o(1).
\end{equation}
By applying similar reasoning, we find from~\eqrefb{eq:\Ffunc2lbound} that $\|\sumTwoJumpxpr\|\in o(1)$.
As we made no assumptions about $\|\dddot H\|$ in the corollary, we cannot use~\eqrefb{eq:\Ffunc1bound} here.  Instead, we see from~\eqrefb{eq:intparts1j} and Lemma~\ref{lem:betaderiv} that $\|C_1\|\in o(1)$, given that $\|\ddot H\|\in o(T)$.
Therefore if $\|\ddot H(s)\|\in o(T)$ then
\begin{equation}
\|(\openone-P_{\gst}(1))U(1,0)\ket{\gst}\|\in o(1).
\end{equation}
\end{proofof}
The above argument shows that some quantum systems can obey the adiabatic approximation if the second derivatives of the Hamiltonian diverge sufficiently slowly.  We also might imagine that if the derivatives were to diverge even slower then it could be possible that the standard criterion for the error in the adiabatic approximation  holds for that evolution.  This claim was made in Corollary~\ref{cor:marzsand2}, which we prove below using similar reasoning to that used in the proof of Corollary~\ref{cor:marzsand}.
\begin{proofof}{Corollary~\ref{cor:marzsand2}}
Given that $\|\ddot H(s)\|\in o(\sqrt{T})$,  Lemma~\ref{lem:intpartsresult} implies that $\Gamma/T\in o(1/\sqrt{T})$.  Therefore we have that
\begin{equation}\left(1+\frac{\Gamma}{\|\dot H\|T}\right)(e^{\Gamma/(\gap_{\min}T)}-1)-\frac{\Gamma}{\gap_{\min}T}\in o(1/T).\end{equation}  It follows from~\eqrefb{eq:\Ffunc2lbound}
that $\|\sumTwoJumpxpr\|\in o(1/T)$.  We then substitute $\|\ddot H(s)\|\in o(\sqrt{T})$ and $\|\dddot H(s)\|\in o(T)$ into~\eqrefb{eq:\Ffunc1bound} to find that
\begin{equation}
\|\sumOneJumpxpr\|\in O\left(\frac{\|\dot H(0)\|}{(\min_{\nu}\gap_{\gst,\nu}^2(0))T}+\frac{\|\dot H(1)\|}{(\min_{\nu}\gap_{\gst,\nu}^2(1))T}\right).
\end{equation}
 Hence it follows from Lemma~\ref{lem:jumpbound} that
\begin{align}
\|(\openone-P_{\gst}(1))U(1,0)\ket{\gst}\|&\in O\left(\max_{s=0,1}\left[\frac{\|\dot H(s)\|}{(\min_\nu\gap_{\gst,\nu}(s))^2T}\right]\right),
\end{align}
completing the proof.
\end{proofof}



\section{Conclusions}

In this paper, we have shown that the traditional estimate of the error in the adiabatic approximation is valid
if $H(s)$ is three-times differentiable and the norms of these derivatives are bounded above by a constant for all $T \ge 0$.  We have also provided upper and lower bounds for the error in
the adiabatic approximation that are asymptotically tight for many Hamiltonians.  These bounds are highly relevant to applications in quantum information processing because
they can be used to provide bounds for the resources needed to perform a quantum adiabatic computation using a particular
adiabatic path.

We have also shown that some Hamiltonians that have second derivatives that diverge as $T\rightarrow \infty$
behave adiabatically.  Interestingly, the Marzlin--Sanders counterexample Hamiltonian~\cite{marzlin:counter} marginally
fails the criterion in that if the second derivative of their Hamiltonian diverged slightly slower as $T\rightarrow \infty$, then
their proposed Hamiltonian would behave adiabatically in that limit.  This implies that it is not necessary to forbid Hamiltonians
that have second derivatives that diverge with $T$ from the class of Hamiltonians that generate adiabatic evolution.

Our research also leaves open a number of avenues for further work.  First, we have noted that for the search Hamiltonian, the leading order term in the expansion for the error in the adiabatic approximation vanishes at certain times, causing the error to transition from $O(1/T)$ to $O(1/T^2)$ at those times.  This effect could be used to quadratically suppress the probability of a particular transition for an adiabatic evolution, and could lead to promising techniques for improving adiabatic quantum gates, adiabatic algorithms or quantum control protocols.  This effect and its implications
will be discussed elsewhere.

Second, a generalization
of our bounds for unbounded Hamiltonians that act on an infinite-dimensional Hilbert space, or Hamiltonians for which $\ket{\gst(0)}$ is degenerate, would be an interesting contribution to the understanding of adiabatic phenomena.  Similarly, our error bounds can be improved in situations where there are near degeneracies in the eigenvalues of the Hamiltonians, causing $\gap_{\min}$ to be small.  In such circumstances it may be more appropriate to use integration by parts to approximate all jumps that occur across eigenstates that are separated by a large eigenvalue gap, and use a different approximation to handle transitions between nearly degenerate eigenstates.

Third, our analysis of the search Hamiltonian has shown that the error in the adiabatic approximation appears to fall below $2/3$ well before the regime where the adiabatic approximation holds.  This suggests that bounds for the error in the adiabatic approximation are needed in the short-time regime in order to find the time complexity of adiabatic algorithms~\cite{rezakhani:adiabaticexponential}.  Such analysis has not yet been performed for adiabatic algorithms, implying that this remains an interesting avenue for further research.

\appendix

\section{Bounds for Derivatives of $\mathbf{\beta_{\mu,\nu}}$}

We prove our bounds on the derivatives of $\beta$ that are presented in Lemma~\ref{lem:betaderiv}, in this appendix.  These bounds allow us to estimate the contributions to the error in the adiabatic approximation that are incurred due the variation of the tunneling amplitude per unit time, $\beta$.  Our proof of these bounds is given below.

\begin{proofof}{Lemma~\ref{lem:betaderiv}}\label{appendix:proofofbetaderiv}
For convenience we omit the dependence on $s$ of any function below
in order to simplify our resulting expressions (IE: $\beta_{\mu,\nu}=\beta_{\mu,\nu}(s),\gap_{\mu,\nu}=\gap_{\mu,\nu}(s),\ldots$).
Then using this notation, we find by substituting the definition of $\beta_{\mu,\nu}$ into the LHS of~\eqrefb{eq:betaderiv} that
$\Biggr\|\sum_{\nu_q,\ldots,\nu_1}\chi(\vec{\nu})\ket{\nu_q(s')}\dot\beta_{\nu_q,\nu_{q-1}}(s)\Biggr\|$ can be expressed as
\begin{align}
\Biggr\|\sum_{\nu_q,\ldots,\nu_1}\chi(\vec{\nu})\ket{\nu_q(s')}\Biggr(\frac{\bra{\dot \nu_{q}}\dot H \ket{\nu_{q-1}}}{\gap_{\nu_{q},\nu_{q-1}}}&+\frac{\bra{ \nu_{q}}\ddot H \ket{\nu_{q-1}}}{\gap_{\nu_{q},\nu_{q-1}}}\nonumber\\
&+\frac{\bra{\nu_{q}}\dot H \ket{\dot\nu_{q-1}}}{\gap_{\nu_{q},\nu_{q-1}}}-\frac{\dot\gap_{\nu_{q},\nu_{q-1}}\bra{ \nu_{q}}\dot H \ket{\nu_{q-1}}}{\gap_{\nu_{q},\nu_{q-1}}^2}\Biggr)\Biggr\|.\label{eq:betaprime1}
\end{align}
The derivatives of every eigenstate are removed from~\eqrefb{eq:betaprime1} by employing a trick involving the resolution of unity.
For a given eigenstate $\ket{\nu}=\ket{\nu(s)}$ we have that
\begin{equation}
\bra{\dot\nu}=\sum_{m=0}^{N-1}\braket{\dot\nu}{m}\bra{m}=\sum_{m=0}^{N-1} \bra{m}\beta_{m,\nu}.\label{eq:braderivsimp}
\end{equation}
If $\braket{m}{\nu}=0$, then it follows by differentiating that relationship that $\braket{\dot m}{\nu}=-\braket{m}{\dot\nu}$.
Similarly, if $\nu=m$ then $\braket{\dot m}{\nu}=0=-\braket{m}{\dot\nu}$, by our choice of eigenbasis.
Therefore it follows that
\begin{equation}
\ket{\dot\nu}=\sum_{m=0}^{N-1} \ket{m}\braket{m}{\dot\nu}=-\sum_{m=0}^{N-1} \ket{m}\beta_{m,\nu}.\label{eq:ketderivsimp}
\end{equation}

Then by using this property, ~\eqrefb{eq:ketderivsimp},~\eqrefb{eq:braderivsimp} and~\eqrefb{eq:betadef}, the right hand side of~\eqrefb{eq:betaprime1} becomes
\begin{align}
&\Biggr\|\sum_{\nu_q,\ldots,\nu_1}\chi(\vec{\nu})\ket{\nu_q(s')}\Biggr(\sum_{ m\ne\nu_{q}}\frac{\bra{\nu_{q}}\dot H\ket{m}}{\gap_{\nu_{q},m}}\frac{\bra{m}\dot H \ket{\nu_{q-1}}}{\gap_{\nu_{q},\nu_{q-1}}}+\frac{\bra{ \nu_{q}}\ddot H \ket{\nu_{q-1}}}{\gap_{\nu_{q},\nu_{q-1}}}\Biggr.\nonumber\\&\qquad\Biggr.-\sum_{m\ne\nu_{q-1}}\frac{\bra{\nu_{q}}\dot H\ket{m}}{\gap_{\nu_{q},\nu_{q-1}}}\frac{\bra{m}\dot H \ket{\nu_{q-1}}}{\gap_{m,\nu_{q-1}}}-\frac{\dot\gap_{\nu_{q},\nu_{q-1}}\bra{ \nu_{q}}\dot H \ket{\nu_{q-1}}}{\gap_{\nu_{q},\nu_{q-1}}^2}\Biggr)\Biggr\|.\label{eq:betaprime2}
\end{align}
Eq.~\eqrefb{eq:betaprime2} has now been re-written in a form where no derivatives of $\ket{\nu_{q}}$, $\ket{\nu_{q-1}}$ nor $\ket{m}$ are present.
Furthermore the states $\ket{\nu_{q}}$ and $\ket{m}$ appear as parts of projection operators, which have unit norm.  Then by using this property,
the triangle inequality
and lower bounding the eigenvalue gaps in the denominators by $\gap_{\min}$ we find that~\eqrefb{eq:betaprime2} is bounded above by
\begin{equation}
\left(\frac{2\|\dot H\|^2+\|\dot H\|\max_{\nu_{q},\nu_{q-1}}|\dot\gap_{\nu_{q},\nu_{q-1}}|}{\gap_{\min}^2}+\frac{\|\ddot H\|}{\gap_{\min}}\right)\left\|\sum_{\nu_{q-1},\ldots,\nu_1}\chi(\vec{\nu})\ket{\nu_{q-1}(s)}\right\|.
\end{equation}
Eq.~\eqrefb{eq:betaderiv} then follows by using $\max_{\nu_{q},\nu_{q-1}}|\dot\gap_{\nu_{q},\nu_{q-1}}|\le2\|\dot H\|$.

By taking the derivative of the inside of the norm in~\eqrefb{eq:betaprime2}, and by inserting the resolution
 of unity to eliminate all derivatives of eigenvectors, we find that $$\left\|\sum_{\nu_{q},\nu_{q-1},\ldots,\nu_1}\chi(\vec\nu)\ket{\nu_{q}(s')}\ddot\beta_{\nu_{q},\nu_{q-1}}(s)\right\|$$ can be expressed as
\begin{align}
&\Biggr\|\sum_{\nu_{q},\nu_{q-1},\ldots,\nu_1}\chi(\vec\nu)\ket{\nu_{q}(s')}\Biggr(\frac{\partial}{\partial_s}\Biggr(\sum_{ m\ne\nu_{q}}\frac{\beta_{\nu_{q},m}\beta_{m,\nu_{q-1}}\gap_{\nu_{q},m}}{\gap_{\nu_{q},\nu_{q-1}}}
-\sum_{ m\ne\nu_{q}}\frac{\beta_{\nu_{q},m}\beta_{m,\nu_{q-1}}\gap_{m,\nu_{q-1}}}{\gap_{\nu_{q},\nu_{q-1}}}\Biggr)\nn
&+\sum_{m\ne\nu_{q}}\frac{\beta_{\nu_{q},m}\bra{m}\ddot H \ket{\nu_{q-1}}}{\gap_{\nu_{q},\nu_{q-1}}}-\sum_{m\ne\nu_{q-1}}\frac{\beta_{m,\nu_{q-1}}\bra{\nu_{q}}\ddot H \ket{m}}{\gap_{\nu_{q},\nu_{q-1}}}+\frac{\bra{\nu_{q}}\dddot H\ket{\nu_{q-1}}}{\gap_{\nu_{q},\nu_{q-1}}}-\frac{\dot\gap_{\nu_{q},\nu_{q-1}}\bra{\nu_{q}}\ddot H\ket{\nu_{q-1}}}{\gap_{\nu_{q},\nu_{q-1}}^2}\nonumber\\
&-\frac{\ddot\gap_{\nu_{q},\nu_{q-1}}\beta_{\nu_{q},\nu_{q-1}}+\dot\gap_{\nu_{q},\nu_{q-1}}\dot\beta_{\nu_{q},\nu_{q-1}}}{\gap_{\nu_{q},\nu_{q-1}}}+\frac{\dot\gap_{\nu_{q},\nu_{q-1}}^2\beta_{\nu_{q},\nu_{q-1}}}{\gap_{\nu_{q},\nu_{q-1}}^2}\Biggr)\Biggr\|.\label{eq:betaprime4}
\end{align}
By applying the triangle inequality, using the definition of $\beta$ in~\eqrefb{eq:betadef}, the fact that for any $x$ and $y$: $\gap_{x,y}(s)\ge \gap_{\min}$, $\|\beta_{x,y}(s)\|\le \|\dot H\|/\gap_{\min}^2$, $|\dot \gap_{x,y}(s)|\le2\|\dot H\|$ and the fact that $\|\sum_m \ket{m(s')}\bra{m(s)}\|=1$ for any $s$ and $s'$, we find that~\eqrefb{eq:betaprime4} is bounded above by
\begin{align}
\Biggr\|\sum_{\nu_{q},\nu_{q-1},\ldots,\nu_1}&\chi(\vec\nu)\ket{\nu_{q}(s')}\Biggr(\frac{\partial}{\partial_s}\Biggr(\sum_{ m\ne\nu_{q}}\frac{\beta_{\nu_{q},m}\beta_{m,\nu_{q-1}}\gap_{\nu_{q},m}}{\gap_{\nu_{q},\nu_{q-1}}}-\sum_{ m\ne\nu_{q}}\frac{\beta_{\nu_{q},m}\beta_{m,\nu_{q-1}}\gap_{m,\nu_{q-1}}}{\gap_{\nu_{q},\nu_{q-1}}}\Biggr)\Biggr\|\nonumber\\
&+\Biggr\|\sum_{\nu_{q},\nu_{q-1},\ldots,\nu_1}\chi(\vec\nu)\ket{\nu_{q}(s')}\left(\frac{\ddot\gap_{\nu_{q},\nu_{q-1}}\beta_{\nu_{q},\nu_{q-1}}+\dot\gap_{\nu_{q},\nu_{q-1}}\dot\beta_{\nu_{q},\nu_{q-1}}}{\gap_{\nu_{q},\nu_{q-1}}}\right)\Biggr\|
+\frac{4\|\dot H\|\|\ddot H\|}{\gap_{\min}^2}\nn
&+\frac{\|\dddot H\|}{\gap_{\min}}+\frac{4\|\dot H\|^3}{\gap_{\min}}.\label{eq:betaprime5}
\end{align}

Our final result in~\eqrefb{eq:betaderiv2} is found by applying the triangle inequality to this, using~\eqrefb{eq:betaderiv}
and the following result from~\cite{ambainis:adiabatic},
\begin{align}
\ddot \gap_{\nu_{q},\nu_{q-1}} &\leq 2\|\ddot H\| + \frac{8\|\dot H\|^2}{\gap_{\min}}\label{eq:Enu2}.
\end{align}
We then find~\eqrefb{eq:betaderiv2} by using the triangle inequality and equations \eqrefb{eq:Enu2} and~\eqrefb{eq:betaderiv} to upper bound~\eqrefb{eq:betaprime4}.
\end{proofof}

\section{Proof of Lemma \ref{lem:intpartsresult}}\label{appendix:proofintparts}
We present a proof of the upper bounds that are presented in Lemma~\ref{lem:intpartsresult} for the
contribution to the error that results from the sum over all one- and two-jump paths.  Our proof technique
is similar to that used in the proof of Lemma~\ref{lem:jumpbound}.  The most significant difference is that we have to apply integration by
parts twice to approximate the sums of all one- and two-jump terms, which we denote $C_1$ and $C_2$.
\begin{proofof}{Lemma \ref{lem:intpartsresult}}

The contribution of the sum of all one-jump paths to the error in the adiabatic approximation is
\begin{equation}
C_1=\sum_{\mu=1}^{N-1}\ket{\mu(1)}\int_0^1 \beta_{\mu,\gst}(s)e^{-i\int_0^s \gap_{\gst,\mu}(\xi)\mathrm{d}\xi T}\mathrm{d}s.\label{eq:c1xpr}
\end{equation}
We find the following upper bound on $\|C_1\|$
by applying integration by parts twice and the triangle inequality on~\eqrefb{eq:c1xpr}
\begin{align}
\|C_1\|\le&\left\|\left. \sum_{\nu_1} \ket{{\nu_1}(1)} \frac{\beta_{{\nu_1},\gst}(s)e^{-i\kappa_{\gst,{\nu_1}}(s)T}}{-i\gap_{\gst,{\nu_1}}(s)T}\right|_0^1\right\|\nn
&\qquad+\left\|\left.\sum_{\nu_1}\ket{{\nu_1}(1)}  \frac{1}{-i\gap_{\gst,{\nu_1}}(s)T}\left(\frac{\partial}{\partial_s}\frac{\beta_{{\nu_1},\gst}(s)}{-i\gap_{\gst,{\nu_1}}(s)T}\right)e^{-i\kappa_{\gst,{\nu_1}}(s)T}\right|_0^1\right\|\nonumber\\
&\qquad+\left\|\int_0^1 \sum_{{\nu_1}}\ket{{\nu_1}(1)} \left(\frac{\partial}{\partial_s}\frac{1}{-i\gap_{\gst,{\nu_1}}(s)T}\left(\frac{\partial}{\partial_s}\frac{\beta_{{\nu_1},\gst}(s)}{-i\gap_{\gst,{\nu_1}}(s)T}\right)\right)e^{-i\kappa_{\gst,{\nu_1}}(s)T}\mathrm{d}s \right\|.
\end{align}
Using our bounds for the derivatives of $\beta$ in~\eqrefb{eq:betaderiv} and~\eqrefb{eq:betaderiv2}, the bound $|\dot \gap_{{\nu_1},\nu}|\le 2\|\dot H\|$ and the upper bound for $|\ddot\gap|$ in~\eqrefb{eq:Enu2} we find that

\begin{equation}
\left\|C_1\right\|\le \left\|\left. \sum_{\nu_1} \ket{{\nu_1}(1)} \frac{\beta_{{\nu_1},\gst}(s)e^{-i\kappa_{\gst,{\nu_1}}(s)T}}{-i\gap_{\gst,{\nu_1}}(s)T}\right|_0^1\right\|+\mathbf{R}_0.\label{eq:\Ffunc1bound2}
\end{equation}
where
\begin{equation}
\mathbf{R}_0:=\frac{2\|\ddot H\|+\|\dddot H\|}{\gap_{\min}^3T^2}+\frac{20\|\dot H\|\|\ddot H\|+12\|\dot H\|^2}{\gap_{\min}^4T^2}+\frac{88\|\dot H\|^3}{\gap_{\min}^5T^2}.
\end{equation}

Now we will apply similar reasoning to approximate $C_2$, which is the contribution of all non-adiabatic two-jump paths to the error in the adiabatic approximation.  This error can be expressed as,
\begin{equation}
\sum_{{\nu_2}\ne 0}\sum_{\nu_1}\ket{{\nu_2}(1)}\int_0^1\beta_{{\nu_2},\nu_1}(s_1)e^{-i\int_0^{s_1}\gap_{\nu_1,{\nu_2}}(\xi)\mathrm{d}\xi T}\int_0^{s_1} \beta_{\nu_1,\gst}(s_2)e^{-i\int_0^{s_2}\gap_{\gst,\nu_1}(\xi)\mathrm{d}\xi T}\mathrm{d}s_2 \mathrm{d}s_1.
\end{equation}
We use integration by parts once on the outer-most integral and the fundamental theorem of calculus to find
\begin{align}
&\sum_{{\nu_2}\ne 0}\sum_{\nu_1}\ket{{\nu_2}(1)}\Biggr( \frac{\beta_{{\nu_2},\nu_1}(1)e^{-i\int_0^1\gap_{\nu_1,{\nu_2}}(\xi)\mathrm{d}\xi T}}{-i\gap_{\nu_1,{\nu_2}}(1)T}\int_0^1\beta_{\nu_1,\gst}(s_2)e^{-i\int_0^{s_2} \gap_{\gst,\nu_1}(\xi)\mathrm{d}\xi T}\mathrm{d}s_2\nonumber\\
&\qquad-\int_0^1\left(\diff{}{s_1}\frac{\beta_{{\nu_2},\nu_1}(s_1)}{-i\gap_{\nu_1,{\nu_2}}(s_1)T}\right)e^{-i\int_0^{s_1}\gap_{\nu_1,{\nu_2}}(\xi)\mathrm{d}\xi T}\int_0^{s_1}\beta_{\nu_1,\gst}(s_2)e^{-i\int_0^{s_2}\gap_{\gst,\nu_1}(\xi)\mathrm{d}\xi T}\mathrm{d}s_2\mathrm{d}s_1\nonumber\\
&\qquad-\int_0^1\frac{\beta_{{\nu_2},\nu_1}(s_1)\beta_{\nu_1,\gst}(s_1)e^{-i\int_0^{s_1}\gap_{\gst,\nu_1}(\xi)+\gap_{\nu_1,{\nu_2}}(\xi)\mathrm{d}\xi T}}{-i\gap_{\nu_1,{\nu_2}}(s_1)T}\mathrm{d}s_1\Biggr).~\label{eq:2jumpintparts1b}
\end{align}
 We must have that $E_{{\nu_2}}(s)\ne E_{\gst}(s)$ to use integration by parts again on this expression, which is ensured by our assumption that the initial state is non-degenerate.  Using integration
by parts again to estimate the inner-most integrals in~\eqrefb{eq:2jumpintparts1b} we have that
\begin{align}
\sum_{{\nu_2},\nu}\ket{{\nu_2}(1)}&\Biggr[ \frac{\beta_{{\nu_2},\nu_1}(1)e^{-i\int_0^1\gap_{\nu_1,{\nu_2}}(\xi)\mathrm{d}\xi T}}{-i\gap_{\nu_1,{\nu_2}}(1)T}\Biggr (\frac{\beta_{\nu_1,\gst}(s_2)e^{-i\int_0^{s_2}\gap_{\gst,\nu_1}(\xi)\mathrm{d}\xi T}}{-i\gap_{\gst,\nu_1}(s_2)T}\Biggr|_0^{1}\nn
&\qquad-\int_0^1\Biggr (\diff{}{s_2}\frac{\beta_{\nu_1,\gst}(s_2)}{-i\gap_{\gst,\nu_1}(s_2)T} \Biggr )e^{-i\int_0^{s_2}\gap_{\gst,\nu_1}(\xi)\mathrm{d}\xi T}\mathrm{d}s_2 \Biggr )\nonumber\\
&-\int_0^1 \Biggr (\diff{}{s_1}\frac{\beta_{{\nu_2},\nu_1}(s_1)}{-i\gap_{\nu_1,{\nu_2}}(s_1)T} \Biggr )e^{-i\int_0^{s_1}\gap_{\nu_1,{\nu_2}}(\xi)\mathrm{d}\xi T}\Biggr\{\frac{\beta_{\nu_1,\gst}(s_2)e^{-i\int_0^{s_2}\gap_{\gst,\nu_1}(\xi)\mathrm{d}\xi T}}{-i\gap_{\gst,\nu_1}(s_2)T}\Biggr|_0^{s_1}&\nonumber\\
&\qquad-\int_0^{s_1}\Biggr (\diff{}{s_2}\frac{\beta_{\nu_1,\gst}(s_2)}{-i\gap_{\gst,\nu_1}(s_2)T} \Biggr )e^{-i\int_0^{s_2}\gap_{\gst,\nu_1}(\xi)\mathrm{d}\xi T}\mathrm{d}s_2\Biggr\}\mathrm{d}s_1\nn
&-\frac{\beta_{{\nu_2},\nu_1}(s_1)\beta_{\nu_1,\gst}(s_1)e^{-i\int_0^{s_1}\gap_{\gst,\nu_1}(\xi)+\gap_{\nu_1,{\nu_2}}\mathrm{d}\xi T}}{-\gap_{\nu_1,{\nu_2}}(s_1)(\gap_{\gst,\nu_1}(s_1)+\gap_{\nu_1,{\nu_2}}(s_1))T}\Biggr|_0^1\nonumber\\
&-\int_0^1\Biggr ( \diff{}{s_1}\frac{\beta_{{\nu_2},\nu_1}(s_1)\beta_{\nu_1,\gst}(s_1)}{-\gap_{\nu_1,{\nu_2}}(s_1)(\gap_{\gst,\nu_1}(s_1)+\gap_{\nu_1,{\nu_2}}(s_1))T^2} \Biggr )e^{-i\int_0^{s_1}\gap_{\gst,\nu_1}(\xi)+\gap_{\nu_1,{\nu_2}}\mathrm{d}\xi T}\mathrm{d}s_1 \Biggr].\label{eq:2jumpintparts2}
\end{align}
Then using the triangle inequality, our upper bounds for the derivatives of $\beta$ in~\eqrefb{eq:betaderiv} and~\eqrefb{eq:betaderiv2}, the bound $|\dot \gap_{{\nu_2},\nu}|\le 2\|\dot H\|$ and the upper bound for $|\ddot\gap|$ in~\eqrefb{eq:Enu2} we find by substitution that
\begin{equation}
\|C_2\|\le \frac{\|\ddot H\|^2+4\|\dot H\|^2+5\|\dot H\|\|\ddot H\|}{\gap_{\min}^4T^2}+\frac{12\|\ddot H\|\|\dot H\|^2+30\|\dot H\|^3}{\gap_{\min}^5T^2}+\frac{36\|\dot H\|^4}{\gap_{\min}^6T^2}.
\end{equation}
\end{proofof}

\acknowledgements
We thank C. Dohotaru, J. Rashid, A. Rezakhani, and B.~C. Sanders for many helpful comments.  This work was supported by Canada' s Natural Sciences and Engineering Research Council (NSERC), the Canadian Institute for Advanced Research (CIFAR), Mathematics of Information Technology and Complex Systems (MITACS)'s Canadian Network of Centres of Excellence (NCE) program,  the United States Army Research Office (USARO), and QuantumWorks.


\end{document}